\newtheorem{theorem}{Theorem}
\newtheorem{conjecture}[theorem]{Conjecture}
\newtheorem{corollary}[theorem]{Corollary}
\newtheorem{remark}[theorem]{Remark}
\newenvironment{proof}[1][Proof]{\noindent\textbf{#1.} }{\ \rule{0.5em}{0.5em}}
\begin{document}

\title{A New Approach Towards the Golomb-Welch Conjecture}
\author{Peter Horak$^{\ast }$,~ Otokar Gro\v{s}ek$^{\dotplus }$~ \\
$^{\ast }$University of Washington, Tacoma\\
$^{\dotplus }$Slovak University of Technology, Slovakia\medskip \\
This paper is dedicated to the 75-th birthday of Toma\v{s} Janovic}
\maketitle

\begin{abstract}
\noindent The Golomb-Welch conjecture deals with the existence of perfect $e$%
-error correcting Lee codes of word length $n,$ $PL(n,e)$ codes. Although
there are many papers on the topic, the conjecture is still far from being
solved. In this paper we initiate the study of an invariant connected to
abelian groups that enables us to reformulate the conjecture, and then to
prove the non-existence of linear $PL(n,2)$ codes for $n\leq 12$. Using this
new approach we also construct the first quasi-perfect Lee codes for
dimension $n=3,$ and show that, for fixed $n$, there are only finitely many
such codes over $%
\mathbb{Z}
.$
\end{abstract}

\noindent It turns out that the Lee metric is more suitable for some
applications than the most frequently used Hamming metric. The Lee metric
has been used for the first time in \cite{Lee} and \cite{Ulrich} when
dealing with transmission of signals over noisy channels. Since then several
types of codes in Lee metric have been studied. For example, the perfect
error-correcting Lee codes introduced in \cite{GW}, the negacyclic codes
introduced by Berlekamp \cite{B}, see also \cite{A1}, \cite{Bose}, \cite{ADH}%
, \cite{E}, and \cite{HB} for other types and results on Lee codes. \medskip

\noindent In this paper we focus on perfect and quasi-perfect
error-correcting Lee codes. Except for practical applications, the
Golomb-Welch conjecture \cite{GW} on the existence of perfect Lee codes has
been the main motive power behind the research in the area for more than 40
years. Although there are many papers on the topic, the conjecture is far
from being solved. In these papers the authors use various methods when
attacking the conjecture. However, each of these methods has its limitation
and will not enable one to settle the conjecture completely. More detailed
account on the methods used will be given in Section 3. Thus, in this paper
we initiate the study of a new approach for tackling the conjecture. We have
looked for a setting for the Golomb-Welch conjecture, also the G-W
conjecture, in the area with a well developed theory containing many deep
results. We have chosen an approach based on a new invariant related to
homomorphisms of abelian groups. We will show how this invariant relates to
linear $PL(n,e)$ codes. Using our approach we prove the non-existence of
linear perfect $2$-error correcting codes for $n=7,...,11.$ Proving the G-W
conjecture for linear codes would constitute a big progress. To complete the
proof of the Golomb-Welch conjecture it would be needed to show that if
there is no linear perfect Lee code then there is no perfect Lee code. In
other words, if there is no lattice tiling of $%
\mathbb{Z}
^{n}$ by Lee spheres of radius $e,$ then there is no tiling of $%
\mathbb{Z}
^{n}$ by such Lee spheres. This means to answer in the affirmative a very
special case of the second part of the Hilbert's 18th problem.\ For more
information on the problem we refer the reader to \cite{Keaton} and \cite%
{Milnor}.\medskip

\noindent Although the G-W conjecture has not been solved yet it is widely
believed that it is true. Therefore, instead of searching for perfect Lee
codes, some codes that are "close" to being perfect are considered; see e.g. 
\cite{Bose}, where quasi-perfect Lee codes have been introduced. We show, by
means of our new approach, that these codes are a natural extension of
perfect Lee codes. So far quasi-perfect Lee codes have been found only for $%
n=2.$ Using our new approach we construct first quasi-perfect Lee codes for $%
n>2.$ On the other hand we prove that, for each $n\geq 3,$ there are at most
finitely many values of $e$ for which there exists a quasi-perfect $e$%
-error-correcting Lee code in $%
\mathbb{Z}
^{n}.$\medskip

\section{Terminology and Basic Concepts}

\noindent Throughout the paper we will use $%
\mathbb{Z}
^{n}$ both for the $n$-fold Cartesian product of the set $%
\mathbb{Z}
$ of integers and for the abelian(component-wise) additive group on $%
\mathbb{Z}
^{n}$. It will always be clear from the context which of the two we have in
mind.\ Because of the coding theory background the elements of $%
\mathbb{Z}
^{n}$ will be called words. The Lee distance (=the Manhattan distance) $\rho
_{L}(v,w)$ of two words $v=(v_{1},v_{2},...,v_{n}),$ $w=(w_{1},...,w_{n})$
in $%
\mathbb{Z}
^{n}$ is given by $\rho _{L}(v,w)=$ $\sum\limits_{i=1}^{n}\left\vert
v_{i}-w_{i}\right\vert $. By $S_{n,r}$ we denote the Lee sphere of radius $r$
in $%
\mathbb{Z}
^{n}$ centered at the origin $O;$ that is, $S_{n,r}=\{w;\rho _{L}(O,w)\leq
r\}$. The Lee sphere of radius $r$ in $%
\mathbb{R}
^{n},$ denoted $L_{n,r},$ is the union of unit cubes centered at words in $%
S_{n,e}.$ Further, $e_{i}$ will stand for the word $(0,...,0,1,0,...,0)$
with $i$-th coordinate equal to $1,$ and we will use $[a,b]$ as a shorthand
for all integers $k,a\leq k\leq b;$ $\left[ a,b\right] $ will also be called
a segment or an interval in $%
\mathbb{Z}
.$\medskip\ 

\noindent A code $\mathcal{L}$ in $%
\mathbb{Z}
^{n}$ is a subset of $%
\mathbb{Z}
^{n}.$ If a code $\mathcal{L}$ is at the same time a lattice then $\mathcal{L%
}$ is called a linear code. Linear codes play a special role as in this case
there is a better chance for the existence of an efficient decoding
algorithm. A code $\mathcal{L}$ is called a perfect $e$-error-correcting Lee
code in $%
\mathbb{Z}
^{n},$ denoted $PL(n,e),$ if (i) $\rho _{L}(v,w)\geq 2e+1$ for every $v,w\in 
\mathcal{L}{\normalsize ;}$ and (ii) every word $v\in 
\mathbb{Z}
^{n}$ is at Lee distance at most $e$ from a unique codeword in $\mathcal{L}%
{\normalsize .}$ Another way how to introduce a $PL(n,e)$ code is by means
of a tiling. Let $V$ be a subset of $%
\mathbb{Z}
^{n}$. \ By a copy of $V$ we mean a translation $V+x=\{v+x,v\in V\}$ of $V,$
where $x\in 
\mathbb{Z}
^{n}.$ A collection $\mathcal{T=\{}V+l;l\in \mathcal{L}\},$ $\mathcal{%
L\subset }%
\mathbb{Z}
^{n},$ of copies of $V$ constitutes a tiling of $%
\mathbb{Z}
^{n}$ by $V$ if $\mathcal{T}\,$forms a partition of $%
\mathbb{Z}
^{n}.$ $\mathcal{T}$ \ is called periodic (lattice) tiling if $\mathcal{L}$
is periodic (forms a lattice). Clearly, a set $\mathcal{L}$ is a $PL(n,e)$
code if and only if $\mathcal{\{}S_{n,e}+l;l\in \mathcal{L}\}$ constitutes a
tiling of $%
\mathbb{Z}
^{n}$ by Lee spheres $S_{n,e}$.\medskip\ 

\noindent If the condition (ii) in the definition of the $PL(n,e)$ code $%
\mathcal{L}$ is relaxed to: (iia) every word $v\in 
\mathbb{Z}
^{n}$ is at Lee distance at most $e+1$ from at least one codeword in $%
\mathcal{L}$, then $\mathcal{L}$ is called a quasi-perfect $e$%
-error-correcting Lee code, a $QPL(n,e)$ code. An efficient decoding
algorithm for quasi-perfect codes has been given in \cite{H}.\medskip

\section{Golomb - Welch Conjecture}

\noindent In this section we present a short account of the state of the art
in the Golomb-Welch conjecture, and describe various approaches how the
conjecture has been tackled so far.\medskip

\noindent For $n=2$ and all $e\geq 1,$ and for $e=1$ and all $n\geq 1,$ $%
PL(n,e)$ codes have been constructed by several authors, see \cite{Stein}.
Golomb and Welch \cite{GW} conjectured that:

\begin{conjecture}
There is no $PL(n,e)$ code for $n\geq 3$ and $e>1.$
\end{conjecture}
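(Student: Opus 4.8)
The plan is to break the conjecture into two parts along the line separating lattice tilings from arbitrary ones. The linear part asks to rule out every lattice $\mathcal{L}\subseteq \mathbb{Z}^{n}$ for which $\{S_{n,e}+l;l\in \mathcal{L}\}$ tiles $\mathbb{Z}^{n}$, for all $n\geq 3$ and $e\geq 2$; the reduction part asks to show that the existence of any $PL(n,e)$ code (that is, of any tiling of $\mathbb{Z}^{n}$ by $S_{n,e}$) already forces a lattice one. The second part is precisely the special case of the second half of Hilbert's $18$th problem mentioned above, and I expect it to be the main obstacle; I would attack the first part with the abelian-group invariant introduced here, and toward the second record only partial reductions, such as the known periodicity of any Lee-sphere tiling.

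For the linear part, the starting point is the observation that a lattice $\mathcal{L}$ tiles $\mathbb{Z}^{n}$ by $S_{n,e}$ if and only if the composite map $S_{n,e}\hookrightarrow \mathbb{Z}^{n}\rightarrow G:=\mathbb{Z}^{n}/\mathcal{L}$ is a bijection; in particular $G$ is then a finite abelian group with $|G|=|S_{n,e}|$, a fixed polynomial in $n$ (for $e=2$ one has $|G|=2n^{2}+2n+1$). Writing $g_{i}$ for the image of $e_{i}$ in $G$, the tiling condition becomes: every element of $G$ has a unique representation as $\sum_{i=1}^{n}c_{i}g_{i}$ with $\sum_{i}|c_{i}|\leq e$. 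So the first concrete step is to reformulate ``there is a linear $PL(n,e)$ code'' as ``some finite abelian group of order $|S_{n,e}|$ admits such a perfect representation by $n$ of its elements''; this is the new invariant, and the linear Golomb--Welch statement is that for $n\geq 3$ and $e\geq 2$ no group of the forced order admits it.

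Next I would establish the non-existence for each of the small dimensions treated here (with $n\leq 6$ drawn from the literature) by a structural analysis of $G$. Since $|S_{n,2}|=2n^{2}+2n+1$ is known, the isomorphism type of $G$ is tightly constrained --- for several of these $n$ the order is prime, so $G$ is forced to be cyclic --- and for each admissible $G$ one analyzes the system imposed by the perfect representation: the elements $0$, $\pm g_{i}$, $\pm 2g_{i}$ and $\pm (g_{i}\pm g_{j})$ must be pairwise distinct and exhaust $G$, a counting check shows the count is exactly tight, and from this one can extract divisibility and character relations among the $g_{i}$ and drive them to a contradiction. The delicate point inside this step is that all pairs $\{i,j\}$ must be controlled simultaneously rather than one at a time; I would organize it by first pinning down the small-prime parts of $G$ and then eliminating the remaining cyclic factors. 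The jump from $n=2$ (where $PL(n,e)$ codes exist) to general $n\geq 3$, however, is not the outcome of any finite computation and rests entirely on the reduction part of the first paragraph --- which is exactly the point at which this program falls short of a full proof of the conjecture.
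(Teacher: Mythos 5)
The statement you are asked to prove is the Golomb--Welch conjecture itself, and the paper does not prove it --- it explicitly describes it as far from being solved and only establishes partial results. Your proposal is likewise not a proof but a research program, and to your credit you say so at the end. The program you sketch is essentially the one the paper follows: split the problem into a linear part and a reduction part, reformulate the linear part via a surjective homomorphism $\phi:\mathbb{Z}^{n}\rightarrow G$ with $|G|=|S_{n,e}|$ whose restriction to $S_{n,e}$ is a bijection (the paper's Theorem \ref{6} and Corollary \ref{CC}, i.e.\ the ``optimal embedding'' of the number $|S_{n,e}|$), and observe that the reduction part is a special case of the second half of Hilbert's 18th problem. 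Your ``unique representation $\sum c_{i}g_{i}$ with $\sum|c_{i}|\leq e$'' condition is exactly the paper's condition (\ref{N}) specialized to $e=2$.

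The genuine gaps are two. First, the reduction part (every tiling of $\mathbb{Z}^{n}$ by $S_{n,e}$ forces a lattice tiling) is completely open; you acknowledge this, but it means the conjecture is not proved by anyone, including the paper. Second, and more importantly for what you claim to be able to do: your assertion that for each admissible $G$ one can ``extract divisibility and character relations among the $g_{i}$ and drive them to a contradiction'' by a structural analysis is not substantiated and, as far as is known, no such argument exists. The paper resorts to a backtracking computer search over (essentially) all $n$-tuples $(g_{1},\dots,g_{n})$ in $\mathbb{Z}_{k_{n}}$ precisely because a conceptual elimination of the candidate representations is unavailable; the only structural input it uses is that $k_{n}=2n^{2}+2n+1$ is square-free for $n=7,\dots,12$ (not prime, as you suggest for ``several'' $n$ --- square-freeness is what forces $G$ cyclic), and even then the verification is feasible only up to $n=12$ and only for $e=2$. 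The linear case for general $e>2$ and $n\geq 7$ remains open as well, so even the linear half of the conjecture is not within reach of the method you describe without a genuinely new idea.
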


\noindent It is shown in \cite{GW} that for each $n$ there exists $e_{n},$
not specified in \cite{GW}, so that for all $e>e_{n}$ there is no $PL(n,e)$
code. To prove this statement the authors use a clever geometric argument;
we will describe it in detail in the proof of Theorem \ref{9}.
Unfortunately, the given type of argument cannot be used in the case when $e$
is relatively small to $n$.\medskip\ 

\noindent Another type of a geometric argument, also making use of tiling by
Lee spheres $L_{n,e},$ has been used in \cite{Gra} to settle the G-W
conjecture for $n=3$ and all $e>1.$ It is an elegant "picture says it all"
proof that, unfortunately, cannot be extended to a higher dimension. Later 
\v{S}pacapan \cite{Sca}, whose proof is computer aided, showed the
non-existence of a $PL(n,e)$ code for $n=4,$ and all $e>1$. His method
cannot be extended even to $n=5$ as the number of cases needed to be checked
in his approach grows too rapidly. It is proved in \cite{Ho} that there is
no $PL(n,e)$ code for $3\leq n\leq 5,$ and all $e>1.$ The only other value
of parameters for which the Golomb-Welch conjecture is known to be true is $%
n=$ $6$ and $e=2,$ see \cite{Hor}. The last two results are proved using an
algebraic/counting argument showing that a $PL(n,e)$ code does not exist
even in a "local sense". Unfortunately, this method is not suitable for
bigger values of $n$.\medskip

\noindent In some papers the non-existence of special types of $PL(n,e)$
codes is proved. We mention here only two of them. Post \cite{P} showed that
there is \ no periodic $PL(n,e)$ code for $3\leq n\leq 5,e\geq n-2,$ and for 
$n\geq 6,$ and $e\geq \frac{\sqrt{2}}{2}n-\frac{1}{4}(3\sqrt{2}-2).$ To
prove it Post used generating functions. Also this method is unsuitable for
small values of $e$. For $e\geq n\geq 3,$ Post's result has been improved in 
\cite{Sca1}, where it is shown that there is no so called optimal Lee-type
local structure for given parameters.\medskip

\noindent There are several reformulations of the Golomb-Welch conjecture.
One is in terms of the perfect domination set in a graph isomorphic to
Cartesian product of cycles, while a reformulation of the conjecture in
terms of circulant graphs appears in \ \cite{C}. So far these two
reformulations have not been helpful in progressing with the Golomb-Welch
conjecture.\medskip

\noindent At the end of this section we briefly describe two extensions of
the Golomb-Welch conjecture. For a detailed account we refer the reader to 
\cite{HB1}. Diameter-$d$ perfect codes have been introduced in the Hamming
scheme by Ahlswede et al. in \cite{A1}, while Etzion \cite{E} extended the
notion to Lee metric. Since, for $d$ odd, a diameter-$d$ perfect Lee code is
a $PL(n,\frac{d-1}{2})$ code as well, these codes constitute a
generalization of perfect error-correcting Lee codes. Therefore the
conjecture stated by Etzion in \cite{E} is an extension of the G-W
conjecture. A further extension of Etzion's conjecture to the \textit{%
perfect distance-dominating set} in a graph $G$ has been stated in \cite{ADH}%
. Unfortunately, as with the mentioned reformulations of the G-W conjecture,
the two extensions have not contributed yet to the solution of the
conjecture. A tiling constructed by Minkowski \cite{M1} provides an
exception to both extensions of the G-W conjecture. However, we do not
believe that this indicates that there might be an exception to the G-W
conjecture as well.

\section{Embedding Abelian Groups}

\noindent In this section we initiate the study of a new invariant of
abelian groups. We show how this invariant is related to the G-W
conjecture.\medskip

\noindent Let $G$ be a finite abelian group and $\phi :%
\mathbb{Z}
^{n}\rightarrow G$ be a homomorphism. For $g\in \phi (%
\mathbb{Z}
^{n})$ we set $\pi (n,G,\phi ,g)=\min \{\rho _{L}(x,O);$ where $\phi
(x)=g\}, $ and say that $g$ is embedded at the minimum distance $\pi
(n,G,\phi ,g)$. If $\phi $ is surjective, the embedding number of $G$ into $%
\mathbb{Z}
^{n\text{ }}$with respect to $\phi $ is defined to be the number $\pi
(n,G,\phi )=\dsum\limits_{g\in G}\pi (n,G,\phi ,g),$ otherwise we put $\pi
(n,G,\phi )=\infty $. The embedding number $\pi (n,G)$ of $G$ in $%
\mathbb{Z}
^{n}$ is set to be $\min_{\phi }\pi (n,G,\phi )$ where the minimum is taken
over all homomorphisms $\phi :%
\mathbb{Z}
^{n}\rightarrow G.$ Finally, for each $k>0,$ we set $\pi (n,k)=\min_{G}\pi
(n,G),$ where the minimum runs over all abelian groups of order $k.$ We note
that the value of $\pi (n,G)$ is invariant under the group isomorphism;
i.e., if $G\simeq H$ then $\pi (n,G)=\pi (n,H)$.\medskip

\noindent In the following example we illustrate the definition of the
embedding number by means of the cyclic group $Z_{16}.$\medskip

\noindent \textit{Example}. Consider a homomorphism $\phi :Z^{2}\rightarrow
Z_{16}$ given by $\phi (e_{1})=1,$ and $\phi (e_{2})=5.$ Then $\pi
(2,Z_{16},\phi ,g)=0$ for $g=0,$ $\pi (2,Z_{16},\phi ,g)=1$ for $%
g=1,5,11,15; $ $\pi (2,Z_{16},\phi ,g)=2$ for $g=2,4,6,10,12,14;$ $\pi
(2,Z_{16},\phi ,g)=3$ for $g=3,7,9,13,$ and $\pi (2,Z_{16},\phi ,g)=4$ for $%
g=8.$ Therefore, $\pi (2,Z_{16},\phi )=0\cdot 1+1\cdot 4+2\cdot 6+3\cdot
4+4\cdot 1=32.$ The homomorphism $\phi $ is illustrated in Fig.1. The Lee
sphere $S_{2,2}$ is bounded by a thick line, while $S_{2,3}$ is bounded by a
double line. The numbers given there are values of $\phi ((x,y)\dot{)}\in
Z_{16}$ at the given point of $Z^{2}.$ The elements in bold font and
underlined are embeddings of elements of $Z_{16}\,\ $\ at the minimum
distance from the origin. If there were more embeddings of an element at the
minimum distance we have picked one of them at random; e.g., there are two
embeddings of $10\in Z_{16}$ at the minimum Lee distance $2,$ and of $13\in
Z_{16}$ at the minimum Lee distance $3.$\medskip\ 

\FRAME{fhFU}{2.0358in}{2.4543in}{0pt}{\Qcb{Homomorphism $\protect\phi %
:Z^{2}\rightarrow Z_{16}$}}{}{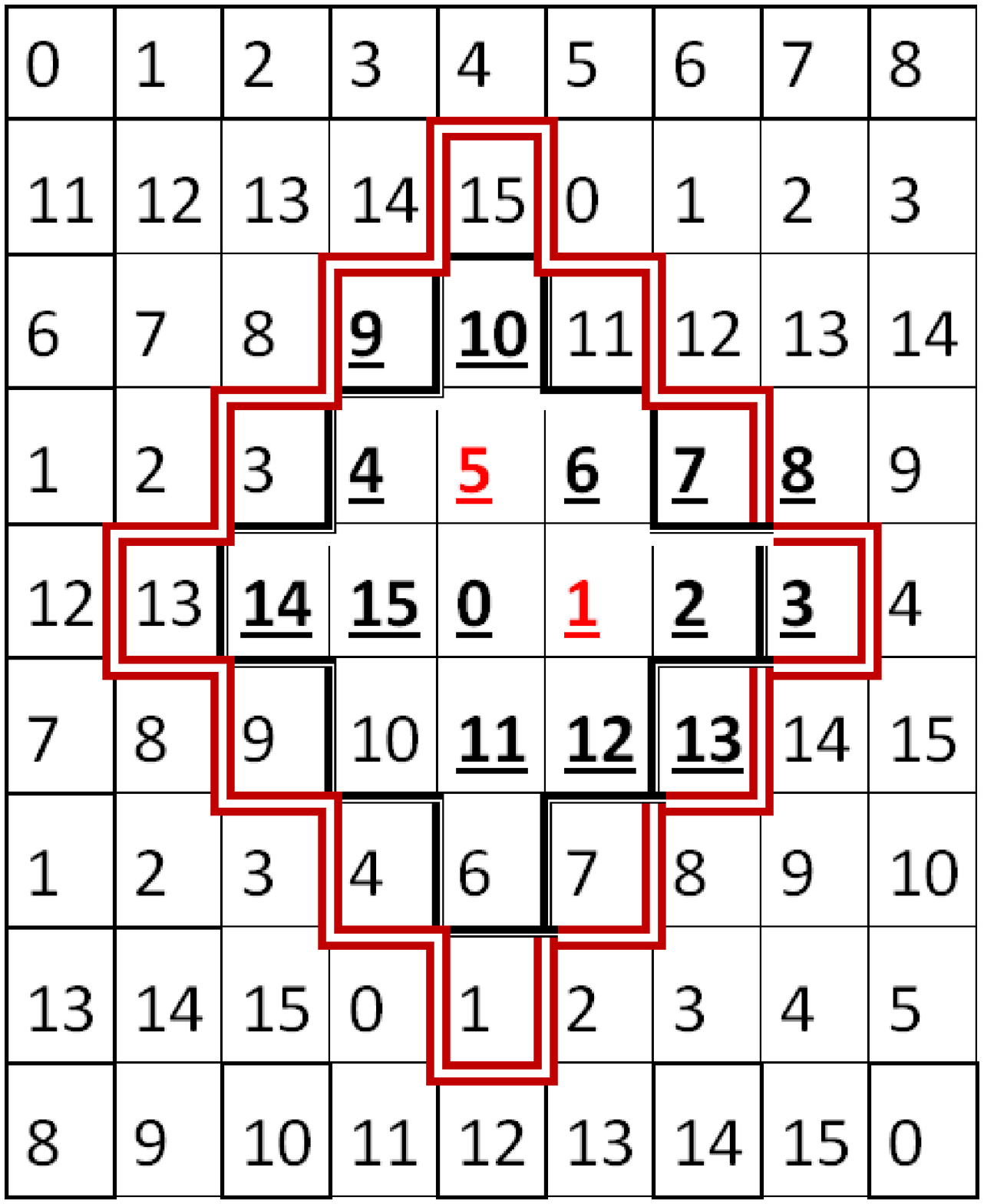}{\special{language "Scientific
Word";type "GRAPHIC";maintain-aspect-ratio TRUE;display "USEDEF";valid_file
"F";width 2.0358in;height 2.4543in;depth 0pt;original-width
7.0249in;original-height 8.489in;cropleft "0";croptop "1";cropright
"0.9996";cropbottom "0";filename 'obr3.eps';file-properties "XNPEU";}}

\noindent The (hypothetically) best value of $\pi (2,Z_{16})$ would be
attained by a homomorphism, if any, $\phi :Z^{2}\rightarrow Z_{16},$ with
the property that there are $\left\vert S_{2,1}\right\vert -\left\vert
S_{2,0}\right\vert =5-1=4$ elements $g$ of $Z_{16}$ with $\pi (2,Z_{16},\phi
)=1;$ $\left\vert S_{2,2}\right\vert -\left\vert S_{2,1}\right\vert =13-5=8$
elements $g$ of $Z_{16}$ with $\pi (2,Z_{16},\phi )=2;$ and finally $%
\left\vert Z_{16}\right\vert -\left\vert S_{2,3}\right\vert =16-13=3$
elements of $g$ with $\pi (2,Z_{16},\phi ,g)=3.$ Then, in total, $\pi
(2,Z_{16},\phi )=0\cdot 1+1\cdot 4+2\cdot 8+3\cdot 3=29.$ It will be shown
in the next theorem that such an embedding for $Z_{16}$ is attained by the
homomorphism $\phi $ given by $\phi (e_{1})=2$ and $\phi (e_{2})=3.$ So, $%
\pi (2,Z_{16})=29.$ As the lower bound applies to any abelian group of order 
$16,$ we also have $\pi (2,16)=29.$\medskip

\noindent To be able to show how the above introduced notion of group
embeddings relates to the G-W conjecture we first present a lower bound on $%
\pi (n,k)$ and then state a theorem proved in \cite{HB1}.\medskip\ 

\noindent Let $n,k\geq 1.$ Then there is a uniquely determined number $r$ so
that $\left\vert S_{n,r}\right\vert \leq k<\left\vert S_{n,r+1}\right\vert .$
To facilitate our discussion we set 
\[
f(n,k)=[\dsum\limits_{1\leq i\leq r}i(\left\vert S_{n,i}\right\vert
-\left\vert S_{n,i-1}\right\vert )]+(r+1)(k-|S_{n,r}|). 
\]

\begin{theorem}
\label{7}Let $k,n\geq 1.$ Then $\pi (n,k)\geq f(n,k).$ Moreover, for $%
\left\vert S_{n,r}\right\vert <k~$(for $\left\vert S_{n,r}\right\vert =k),$ $%
\pi (n,k)=f(n,k)$ if and only if there is an abelian group $G$ of order $k$
and a homomorphism $\phi :%
\mathbb{Z}
^{n}\rightarrow G$ such that the restriction of $\phi $ to $S_{n,r}$ is
injective and the restriction of $\phi $ to $S_{n,r+1}$ is surjective (the
restriction of $\phi $ to $S_{n,r}$ is a bijection).
\end{theorem}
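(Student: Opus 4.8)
The plan is to express $\pi(n,G,\phi)$ entirely in terms of the cardinalities $|\phi(S_{n,i})|$ of the images of the Lee balls, and then to minimize using only the two completely obvious bounds on those cardinalities.

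\textbf{Step 1 (a counting identity).} For a surjective homomorphism $\phi:\mathbb{Z}^{n}\to G$ with $|G|=k$ I would first observe that $\pi(n,G,\phi,g)\le i$ holds precisely when $g\in\phi(S_{n,i})$, so the number of $g\in G$ with $\pi(n,G,\phi,g)\ge i+1$ equals $k-|\phi(S_{n,i})|$. Summing this tail count over $i\ge 0$ (a layer-cake argument) gives
\[
\pi(n,G,\phi)=\sum_{g\in G}\pi(n,G,\phi,g)=\sum_{i\ge 0}\bigl(k-|\phi(S_{n,i})|\bigr),
\]
which is a finite sum, since surjectivity of $\phi$ together with $\bigcup_{i}S_{n,i}=\mathbb{Z}^{n}$ forces $\phi(S_{n,i})=G$ for all large $i$.

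\textbf{Step 2 (the lower bound).} Since a map cannot increase cardinality, $|\phi(S_{n,i})|\le|S_{n,i}|$, and trivially $|\phi(S_{n,i})|\le|G|=k$; hence $|\phi(S_{n,i})|\le\min(|S_{n,i}|,k)$. Substituting into the identity, and recalling that $r$ is characterized by $|S_{n,i}|\le k$ for $i\le r$ and $|S_{n,i}|>k$ for $i\ge r+1$, I obtain
\[
\pi(n,G,\phi)\ \ge\ \sum_{i\ge 0}\bigl(k-\min(|S_{n,i}|,k)\bigr)\ =\ \sum_{i=0}^{r}\bigl(k-|S_{n,i}|\bigr).
\]
A routine Abel summation, using $S_{n,0}=\{O\}$, identifies $\sum_{i=0}^{r}(k-|S_{n,i}|)$ with $f(n,k)$. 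Taking the minimum over all $\phi$ and over all abelian groups $G$ of order $k$ then yields $\pi(n,k)\ge f(n,k)$.

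\textbf{Step 3 (the equality case).} Here I would just chase equality back through Step 2: $\pi(n,G,\phi)=f(n,k)$ forces, term by term, $|\phi(S_{n,i})|=\min(|S_{n,i}|,k)$ for every $i$. For $i\le r$ this says $\phi$ is injective on $S_{n,i}$, and since the balls are nested this is equivalent to $\phi$ being injective on $S_{n,r}$; for $i=r+1$ it says $|\phi(S_{n,r+1})|=k$, i.e. $\phi$ restricted to $S_{n,r+1}$ is onto $G$ (in particular $\phi$ is then surjective, so that hypothesis costs nothing); and for $i>r+1$ no further condition arises. This is exactly the stated criterion when $|S_{n,r}|<k$. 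When $|S_{n,r}|=k$ the term $(r+1)(k-|S_{n,r}|)$ vanishes, the condition at $i=r+1$ becomes automatic once $\phi$ is injective on $S_{n,r}$ (as then $|\phi(S_{n,r})|=k=|G|$), and injectivity together with $|S_{n,r}|=|G|$ upgrades ``injective on $S_{n,r}$'' to ``$\phi|_{S_{n,r}}$ is a bijection''. Conversely, any $\phi$ meeting these conditions turns every inequality in Step 2 into an equality. I do not expect a genuine obstacle in any of this; the only points needing care are the bookkeeping in the Abel summation and keeping the two boundary cases $|S_{n,r}|<k$ and $|S_{n,r}|=k$ straight.
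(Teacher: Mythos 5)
Your proof is correct and is essentially the paper's argument in dual form: the paper expands $\sum_{g}\pi(n,G,\phi,g)$ as $\sum_{d}d\,|G_{d}|$ over the level sets $G_{d}=\{g:\pi(n,G,\phi,g)=d\}$, bounds $|G_{d}|\leq|S_{n,d}|-|S_{n,d-1}|$ via the deficiencies $\varepsilon_{d}$, and this is exactly your tail identity $\sum_{i}(k-|\phi(S_{n,i})|)$ with the bound $|\phi(S_{n,i})|\leq\min(|S_{n,i}|,k)$ after an Abel summation. Your layer-cake bookkeeping makes the equality analysis read off a bit more directly (injectivity on $S_{n,r}$ and surjectivity on $S_{n,r+1}$ appear immediately as $|\phi(S_{n,i})|=\min(|S_{n,i}|,k)$), but the mathematical content is the same.
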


\noindent We will say that a number $k>0$ (an abelian group $G$ \ of order $%
k)$ has an optimal embedding in $%
\mathbb{Z}
^{n}$ if $\pi (n,k)=f(n,k)$ (if $\pi (n,G)=f(n,k)$)$.$\medskip

\begin{proof}
Denote by $G_{d}$ the set $\{g;g\in G,$ such that $\pi (n,G,\phi ,g)=d\}$,
and, for $d\leq r,$ $\varepsilon _{d}=(\left\vert S_{n,d}\right\vert
-\left\vert S_{n,d-1}\right\vert )-\left\vert G_{d}\right\vert .$ Since
there are in $%
\mathbb{Z}
^{n}$ exactly $\left\vert S_{n,d}\right\vert -\left\vert
S_{n,d-1}\right\vert $ words at distance $d$ from the origin, we have $%
\left\vert G_{d}\right\vert \leq \left\vert S_{n,d}\right\vert -\left\vert
S_{n,d-1}\right\vert ,$ and thus $\varepsilon _{d}\geq 0.$ We get $\pi
(n,G,\phi )=\dsum\limits_{g\in G}\pi (n,G,\phi ,g)=\dsum\limits_{d\geq
0}d\left\vert G_{d}\right\vert =\dsum\limits_{0\leq d\leq r}d\left\vert
G_{d}\right\vert +\dsum\limits_{d>r}d\left\vert G_{d}\right\vert =$ $%
(\dsum\limits_{d\leq r}d(\left\vert S_{n,d}\right\vert -\left\vert
S_{n,d-1}\right\vert -\varepsilon _{d}))+(r+1)(\left\vert G\right\vert
-\left\vert S_{n,r}\right\vert +\dsum\limits_{d\leq r}\varepsilon
_{d})+\dsum\limits_{d\geq r+2}(d-r-1)\left\vert G_{d}\right\vert $ since $%
\dsum\limits_{d>r}\left\vert G_{d}\right\vert =\left\vert G\right\vert
-\left\vert S_{n,r}\right\vert +\dsum\limits_{d\leq r}\varepsilon _{d}.$
Therefore, 
\begin{equation}
\pi (n,G,\phi )=f(n,r)+\dsum\limits_{d\leq r}(r+1-d)\varepsilon
_{d}+\dsum\limits_{d\geq r+2}(d-r-1)\left\vert G_{d}\right\vert .  \label{B}
\end{equation}

\noindent By (\ref{B}), $\pi (n,G,\phi )\geq f(n,r)$ for all homomorphisms $%
\phi .$ There is an equality in (\ref{B}) iff $\varepsilon _{d}=0$ for all $%
d\leq r,$ and $\left\vert G_{d}\right\vert =0$ for all $d>r+1;$ i.e., iff
the restriction of $\phi $ to $S_{n,r}$ is injective, and the restriction of 
$\phi $ to $S_{n,r+1}$ is surjective. For $k=S_{n,r}$ this necessary and
sufficient condition translates to $\phi $ is a bijection on $%
S_{n,r.}\medskip $
\end{proof}

\noindent The following theorem has been stated in \cite{HB1}

\begin{theorem}
\label{6} Let $V$ be a subset of $%
\mathbb{Z}
^{n}.$ Then there is a lattice tiling of $%
\mathbb{Z}
^{n}$ by $V$ if and only if there is an abelian group $G$ of order $%
\left\vert V\right\vert ,$ and a homomorphism $\phi :%
\mathbb{Z}
^{n}\rightarrow G$ so that the restriction of $\phi $ to $V$ is a bijection.
\end{theorem}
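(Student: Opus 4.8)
The plan is to prove both directions by relating a lattice tiling of $\mathbb{Z}^{n}$ by $V$ to the quotient group $G=\mathbb{Z}^{n}/\mathcal{L}$, where $\mathcal{L}$ is the lattice of translation vectors. For the forward direction, suppose $\{V+l;l\in\mathcal{L}\}$ is a lattice tiling. Since $\mathcal{L}$ is a sublattice of $\mathbb{Z}^{n}$ of finite index (the tiling is a partition and $V$ is finite, so $[\mathbb{Z}^{n}:\mathcal{L}]=|V|$), the quotient $G=\mathbb{Z}^{n}/\mathcal{L}$ is a finite abelian group of order $|V|$. Let $\phi:\mathbb{Z}^{n}\rightarrow G$ be the canonical projection. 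The tiling condition says precisely that every word $w\in\mathbb{Z}^{n}$ can be written uniquely as $v+l$ with $v\in V$, $l\in\mathcal{L}$; equivalently, every coset of $\mathcal{L}$ meets $V$ in exactly one point. That is exactly the statement that $\phi|_{V}$ is a bijection onto $G$.

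For the converse, suppose we are given a finite abelian group $G$ of order $|V|$ and a homomorphism $\phi:\mathbb{Z}^{n}\rightarrow G$ whose restriction to $V$ is a bijection. Set $\mathcal{L}=\ker\phi$, a sublattice of $\mathbb{Z}^{n}$. Because $\phi|_{V}$ is a bijection onto a set of size $|V|=|G|$, $\phi$ is surjective, hence $[\mathbb{Z}^{n}:\mathcal{L}]=|G|=|V|$, so $\mathcal{L}$ is a lattice of full rank $n$. I then claim $\{V+l;l\in\mathcal{L}\}$ tiles $\mathbb{Z}^{n}$: given $w\in\mathbb{Z}^{n}$, there is a unique $v\in V$ with $\phi(v)=\phi(w)$, and then $w-v\in\ker\phi=\mathcal{L}$, so $w\in V+(w-v)$; uniqueness of the pair $(v,l)$ follows because if $w=v'+l'$ with $v'\in V$, $l'\in\mathcal{L}$, then $\phi(v')=\phi(w)$ forces $v'=v$ and hence $l'=w-v=l$. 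Thus $\mathcal{T}$ is a partition of $\mathbb{Z}^{n}$ and $\mathcal{L}$ is a lattice, so it is a lattice tiling.

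The only genuine subtlety — and the step I would state carefully rather than wave through — is the bookkeeping around finiteness and full rank. In the forward direction one must argue that a lattice $\mathcal{L}$ admitting a partition of $\mathbb{Z}^{n}$ by finitely many translates of a finite set $V$ necessarily has finite index equal to $|V|$ (so that $\mathbb{Z}^{n}/\mathcal{L}$ is a \emph{finite} abelian group, as required); this is a standard counting argument comparing how many lattice points of each structure lie in a large box, but it deserves a sentence. In the reverse direction the analogous point is that surjectivity of $\phi$ (forced by $\phi|_{V}$ being onto $G$) guarantees $\ker\phi$ has full rank $n$, so that "lattice tiling" is meant in the intended sense and not by a degenerate lower-rank sublattice. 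Everything else is the direct translation between "every coset of $\mathcal{L}$ contains exactly one element of $V$" and "$\{V+l\}$ is a partition", which is immediate once the group-theoretic framework is in place.
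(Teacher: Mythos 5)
Your proof is correct. Note that the paper itself gives no proof of Theorem \ref{6} --- it is quoted from \cite{HB1} --- but your argument, which identifies the condition ``$\phi|_{V}$ is a bijection onto $G=\mathbb{Z}^{n}/\mathcal{L}$'' with ``$V$ is a complete set of coset representatives of $\mathcal{L}$'' and then with the partition property of $\{V+l;\,l\in\mathcal{L}\}$, is the standard proof of this equivalence, and your care about finite index and full rank of $\ker\phi$ is exactly the right bookkeeping.
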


\noindent Combining the above two theorems for $V=S_{n.e}$ yields:

\begin{corollary}
\label{CC}There exists a linear $PL(n,e)$ code if and only if there is an
optimal embedding of the number $\left\vert S_{n,e}\right\vert $ in $%
\mathbb{Z}
^{n}.$
\end{corollary}

\noindent In turn, we get a reformulation of the G-W conjecture in the case
of linear codes:

\begin{conjecture}
\label{10} The number $\left\vert S_{n,e}\right\vert $ does not have an
optimal embedding in $%
\mathbb{Z}
^{n}$ for $n\geq 3$ and $e>1.$
\end{conjecture}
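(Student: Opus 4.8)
\noindent The plan is to attack Conjecture \ref{10} through the invariant $\pi (n,k)$ and its optimality criterion. By Theorem \ref{7}, the number $k=\left\vert S_{n,e}\right\vert $ has an optimal embedding in $\mathbb{Z}^{n}$ precisely when some abelian group $G$ with $\left\vert G\right\vert =k$ admits a homomorphism $\phi :\mathbb{Z}^{n}\rightarrow G$ whose restriction to $S_{n,e}$ is a bijection. A homomorphism $\phi $ is determined by the images $a_{i}=\phi (e_{i})\in G$, and $\phi $ is injective on $S_{n,e}$ iff the $k$ Lee combinations $\sum_{i}c_{i}a_{i}$ with $\sum_{i}\left\vert c_{i}\right\vert \leq e$ are pairwise distinct in $G$. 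Thus the whole problem is to show that for $n\geq 3$ and $e>1$ no choice of generators $a_{1},\dots ,a_{n}$ of an order-$k$ abelian group can separate $S_{n,e}$. This is exactly the Golomb-Welch statement for linear codes, so any complete argument must contain a genuinely new obstruction; what follows is the line I would pursue.

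\noindent First I would pass to the dual (lattice) picture. By Theorem \ref{6}, an optimal embedding is equivalent to a lattice tiling $\{S_{n,e}+l:l\in \Lambda \}$ of $\mathbb{Z}^{n}$, where $\Lambda =\ker \phi $ is a sublattice of index $k$. Two distinct words of $S_{n,e}$ are identified by $\phi $ iff their difference lies in $\Lambda $, and every such difference lies in $S_{n,2e}\setminus \{O\}$. Hence an optimal embedding exists iff there is a sublattice $\Lambda \leq \mathbb{Z}^{n}$ with $[\mathbb{Z}^{n}:\Lambda ]=\left\vert S_{n,e}\right\vert $ and $\Lambda \cap S_{n,2e}=\{O\}$. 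Conjecture \ref{10} thereby becomes the assertion that, for $n\geq 3$ and $e>1$, no index-$\left\vert S_{n,e}\right\vert $ sublattice of $\mathbb{Z}^{n}$ avoids every nonzero point of the Lee ball $S_{n,2e}$.

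\noindent The core of the attack would be a local, counting-flavoured contradiction. Because $[\mathbb{Z}^{n}:\Lambda ]=\left\vert S_{n,e}\right\vert $, a fundamental domain of $\Lambda $ has exactly $\left\vert S_{n,e}\right\vert $ integer points, so the packing of $S_{n,e}$-spheres forced by $\Lambda $ has density one; this is what makes a naive volume argument inconclusive. Instead I would fix a primary decomposition of $G$, track the orders and independence relations among the generators $a_{i}$, and show that requiring all $\pm a_{i}$, all $\pm a_{i}\pm a_{j}$, and the remaining short Lee combinations to be distinct and nonzero over-determines $G$ once $n\geq 3$ and $e\geq 2$. For the sub-case $e=2$, where $\left\vert S_{n,2}\right\vert =2n^{2}+2n+1$, this is exactly the route by which the non-existence of linear $PL(n,2)$ codes for small $n$ is obtained: one uses the specific factorizations of $2n^{2}+2n+1$ to rule out every admissible group structure.

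\noindent The hard part is making such an argument \emph{uniform} in $n$ and $e$. The difference set $S_{n,2e}\setminus \{O\}$ grows like $2^{n}e^{n}/n!$, yet the index $\left\vert S_{n,e}\right\vert $ grows at the same rate, so the constraint of avoiding $S_{n,2e}$ is never loose; and the group structures available for $G$ depend erratically on the arithmetic of $\left\vert S_{n,e}\right\vert $, so no single congruence or modular obstruction can cover all parameters at once. Consequently the invariant $\pi (n,k)$ supplies a clean reformulation and a workable tool for each fixed pair $(n,e)$, but a proof of the full conjecture would require a new structural idea --- some parameter-independent invariant of the pair $(\Lambda ,S_{n,2e})$ --- that the counting and generating-function methods surveyed in Section 3 do not provide.
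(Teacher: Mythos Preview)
The statement you are addressing is labelled \emph{Conjecture}~\ref{10} in the paper, and the paper does not prove it; it is presented precisely as an open reformulation of the Golomb--Welch conjecture for linear codes. Your write-up is therefore not being compared against a proof, because there is none. What you have produced is not a proof either, and to your credit you say so in the final paragraph: you correctly reduce the question to the existence of a sublattice $\Lambda\leq\mathbb{Z}^{n}$ of index $\left\vert S_{n,e}\right\vert$ with $\Lambda\cap S_{n,2e}=\{O\}$ (this is exactly the content of Theorems~\ref{7} and~\ref{6}), but then you acknowledge that no uniform obstruction is available and that completing the argument ``would require a new structural idea.''

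So the genuine gap is simply that nothing in your proposal goes beyond the reformulation the paper already gives. The passage about ``over-determining $G$'' via the primary decomposition is suggestive but not an argument: you do not exhibit any actual contradiction, and for $e=2$ you defer to the case-by-case computations the paper carries out for $7\leq n\leq 12$. Your proposal is a fair summary of why the conjecture is hard and of the equivalent lattice formulation, but it should be labelled a discussion or strategy, not a proof.
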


\noindent The following theorem constitutes another main results of the
paper.

\begin{theorem}
\label{11}Each $k\geq 1$ has an optimal embedding in $%
\mathbb{Z}
^{2}.$ In particular, for each $k\geq 1,$ the cyclic group $%
\mathbb{Z}
_{k}$ has an optimal embedding in $%
\mathbb{Z}
^{2}$.
\end{theorem}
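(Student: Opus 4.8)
The plan is to prove the stronger of the two statements, namely that the cyclic group $\mathbb{Z}_k$ has an optimal embedding in $\mathbb{Z}^2$; the first sentence then follows immediately from the definition of $\pi(2,k)=\min_G \pi(2,G)$ together with Theorem \ref{7}. By Theorem \ref{7} it suffices to exhibit, for each $k\geq 1$, a homomorphism $\phi:\mathbb{Z}^2\rightarrow \mathbb{Z}_k$ whose restriction to $S_{2,r}$ is injective and whose restriction to $S_{2,r+1}$ is surjective, where $r$ is the unique integer with $|S_{2,r}|\leq k<|S_{2,r+1}|$. Recall that in dimension $2$ one has the explicit formula $|S_{2,r}|=2r^2+2r+1$, so the radius $r$ is determined by $2r^2+2r+1\leq k<2r^2+6r+5$.

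First I would set up the candidate homomorphism suggested by the Example in the text, namely $\phi(e_1)=a$, $\phi(e_2)=b$ for suitably chosen $a,b\in\mathbb{Z}_k$ depending on $r$; the Example's choice $\phi(e_1)=2,\phi(e_2)=3$ for $k=16$ (where $r=2$) indicates the pattern $a=r,b=r+1$ (or a small perturbation thereof). With this choice, $\phi(x,y)=rx+(r+1)y \pmod k$. The key geometric fact to verify is that the map $(x,y)\mapsto rx+(r+1)y$ is injective on the Lee ball $S_{2,r}=\{(x,y):|x|+|y|\leq r\}$ when we reduce modulo $k$. For this I would argue that the integer values $rx+(r+1)y$ taken over $(x,y)\in S_{2,r}$ all lie in an interval of length less than $k$ (in fact one computes the max and min of $rx+(r+1)y$ on the ball, which are attained at $(0,r)$ and $(0,-r)$ giving range $[-r(r+1),r(r+1)]$, of total width $2r(r+1)+1=2r^2+2r+1=|S_{2,r}|\leq k$), and that moreover the $|S_{2,r}|$ integers so obtained are \emph{distinct}; distinctness holds because $rx+(r+1)y = rx'+(r+1)y'$ forces $r(x-x')=(r+1)(y'-y)$, and since $\gcd(r,r+1)=1$ we get $(r+1)\mid(x-x')$, which together with $|x|,|x'|\leq r$ forces $x=x'$ and then $y=y'$. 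Hence $\phi$ restricted to $S_{2,r}$ is injective.

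Next I would handle surjectivity of $\phi$ on $S_{2,r+1}$. Since $\phi(S_{2,r})$ already contains the $|S_{2,r}|$ consecutive (as integers, before reduction) values $-r(r+1),\dots,r(r+1)$, it remains to show that the extra $|S_{2,r+1}|-|S_{2,r}|=4r+4$ boundary words of $S_{2,r+1}$, i.e. those with $|x|+|y|=r+1$, cover the remaining $k-|S_{2,r}|<4r+4$ residues modulo $k$. Here one checks that the values $rx+(r+1)y$ on the sphere $|x|+|y|=r+1$ extend the consecutive block by $\pm(r+1),\pm(2r+1),\dots$ in both directions and, modulo $k$, sweep out all residues not yet hit; the precise bookkeeping uses that consecutive boundary words differ by $e_1-e_2$ or $e_1+e_2$ etc., changing the value by $-1$ or by $2r+1$, which modulo $k$ (with $k<2r^2+6r+5$) is enough to fill the gap. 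This part is essentially the ``dual'' count to the injectivity argument and is where the constraint $k<|S_{2,r+1}|$ is used essentially.

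The main obstacle I anticipate is the surjectivity step, and in particular making the boundary-word analysis uniform in $k$ across the whole range $|S_{2,r}|\leq k<|S_{2,r+1}|$: for $k$ close to $|S_{2,r}|$ only a few boundary residues are needed, while for $k$ close to $|S_{2,r+1}|$ almost the entire boundary sphere must be used, and one must check no residue is missed and (when $|S_{2,r}|=k$) that $\phi$ is actually a bijection on $S_{2,r}$ rather than merely injective — which is automatic from the cardinality count. A clean way to organize this is to fix $r$ and let $k$ range over the $4r+4$ values in $[|S_{2,r}|,|S_{2,r+1}|)$, showing that as $k$ increases by $1$ one additional boundary word's value becomes a ``new'' residue; alternatively, one can perturb the pair $(a,b)$ slightly as a function of $k$ (as the worked Example implicitly does) to keep the hit residues an initial segment. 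I would present whichever of these two bookkeeping schemes yields the shorter verification, relegating the explicit tally of boundary values to a short computation.
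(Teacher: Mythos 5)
Your setup and your injectivity argument follow the paper's route: the paper also takes $\phi(e_1)=r$, $\phi(e_2)=r+1$, observes that the integer-valued map $\Phi(x,y)=rx+(r+1)y$ sends $S_{2,r}$ bijectively onto the interval $[-r(r+1),r(r+1)]$ of exactly $|S_{2,r}|\leq k$ integers, and concludes injectivity mod $k$. (Minor slip: $(r+1)\mid(x-x')$ with $|x|,|x'|\leq r$ allows $x-x'=\pm(r+1)$, not only $x=x'$; this is easily excluded since it would force $|x-x'|+|y-y'|=2r+1$, exceeding $\rho_L(u,O)+\rho_L(v,O)\leq 2r$.)

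The genuine gap is in the surjectivity step, exactly where you anticipated trouble, and your primary bookkeeping scheme cannot be repaired. With the fixed generators $(r,r+1)$, every value of $\Phi$ on $S_{2,r+1}$ lies in $[-(r+1)^2,(r+1)^2]$, an interval of only $2(r+1)^2+1=2r^2+4r+3$ integers; hence $\phi(S_{2,r+1})$ has at most $2r^2+4r+3$ residues mod $k$, and for $k$ in the roughly $2r+1$ values with $2r^2+4r+3<k<|S_{2,r+1}|=2r^2+6r+5$ the restriction of $\phi$ to $S_{2,r+1}$ is provably \emph{not} surjective. So the statement ``one additional boundary word's value becomes a new residue as $k$ grows'' fails in the upper half of the range. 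Your fallback — perturbing the generator pair — is the idea the paper actually uses, but it has to be made explicit: the paper splits the range at $k=2r^2+4r$, keeping $(r,r+1)$ for $|S_{2,r}|\leq k\leq 2r^2+4r$ and switching to $\phi(e_1)=r+1$, $\phi(e_2)=r+2$ for $2r^2+4r+1\leq k<2r^2+6r+5$. For the second pair one must then redo both halves of the verification: injectivity on $S_{2,r}$ holds because $\Phi'(\overline{S}_{2,r})\subseteq[1,r(r+2)]$ and $r(r+2)\leq\frac{k-1}{2}$ keeps the upper and lower halves of the sphere from colliding mod $k$, and surjectivity on $S_{2,r+1}$ holds because $\Phi'(\overline{S}_{2,r+1})=[1,(r+1)(r+2)]\supseteq[1,\frac{k-1}{2}]$ together with its negative covers all of $\mathbb{Z}_k$. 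Without specifying this second pair and carrying out these checks, the proof is incomplete for about half of the admissible values of $k$.
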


\begin{proof}
The intersection $l_{r,m}$ of the sphere $S_{2,r}$ with the line $x+y=m$ is
non-empty if and only if $-r\leq m\leq r$, and $l_{r,m}$ comprises points $%
(x,y)$ in $%
\mathbb{Z}
^{2}$ with $\left\lceil \frac{m-r}{2}\right\rceil \leq x\leq \left\lfloor 
\frac{m+r}{2}\right\rfloor ,$ $\left\lceil \frac{m-r}{2}\right\rceil \leq
y\leq \left\lfloor \frac{m+r}{2}\right\rfloor ,$ and $x+y=m$. We split the
sphere $S_{2,r}$ into the upper part $\overline{S}_{2,r}=\{(x,y);$ $(x,y)\in
S_{2,r},$ and $x+y=m,$ where $0<m\leq r,$ or $m=0$ and $x<0\}$, the lower
part $\underline{\emph{S}}_{2,r}=S_{2,r}-\overline{S}_{2,r}-\{(0,0)\}$, and
the origin. As $(x,y)\in \overline{S}_{2,r}$ implies $(-x,-y)\in \underline{%
\emph{S}}_{2,r}$ we get $\phi (\underline{\emph{S}}_{2,r})=-\phi (\overline{S%
}_{2,r})$ for each homomorphism $\phi $ on $%
\mathbb{Z}
^{2}.$\medskip

\noindent Let $\Phi :%
\mathbb{Z}
^{2}\rightarrow 
\mathbb{Z}
$ be a homomorphism given by $\Phi (e_{1})=r,$ and $\Phi (e_{2})=r+1.$ Since 
$\Phi ((x,y))-\Phi ((x+1,y-1))=$ $rx+(r+1)y-r(x+1)-(r+1)(y-1)=1,$ we get
that $\Phi (l_{r,m})$ is a segment in $%
\mathbb{Z}
,$ and $\Phi $ is decreasing on $l_{r,m}$ in the $x$-coordinate. To prove
that $\Phi (l_{r,m})$ and $\Phi (l_{r,m+1})$ constitute two consecutive
segments it suffices to verify (we leave it to the reader) that for all $%
r,m, $ the value of $\Phi $ at the point $(x,y)$ in $l_{r,m}$ with the
smallest value of $x,$ is by one smaller than the value of $\Phi $ at the
point $(x,y) $ in $l_{r,m+1}$ with the largest value of $x,$ i.e., that $%
\Phi ((m-\left\lfloor \frac{m+r}{2}\right\rfloor ,\left\lfloor \frac{m+r}{2}%
\right\rfloor ))+1=$ $\Phi ((m+1-\left\lceil \frac{m+1-r}{2}\right\rceil
,\left\lceil \frac{m+1-r}{2}\right\rceil ))$. This in turn implies that $%
\Phi (S_{2,r})$ is the union of consecutive segments and the restriction of $%
\Phi $ to $S_{2,r}$ is an injection. Hence, $\left\vert \phi
(S_{2,r})\right\vert =\left\vert S_{2,r}\right\vert =2r^{2}+2r+1.$ In
addition, we have $\Phi ((0,0))=0,$ and therefore $\Phi (\overline{S}%
_{2,r})=[1,r(r+1)],$ and $\Phi (\underline{S}_{2,r})=[-r(r+1),-1]$.\medskip

\noindent Now we are ready to prove the statement of the theorem. First, let

(i) $k=\left\vert S_{2,r}\right\vert =2r^{2}+2r+1$ for some $r.$ In this
case the statement that $k$ has an optimal embedding in $%
\mathbb{Z}
^{2}$ is equivalent to the statement that there is a tiling of $%
\mathbb{Z}
^{2}$ by Lee spheres $L_{2,r};$ this has been shown by several authors, see
e.g. \cite{GW}. It is not difficult to see that such a tiling is unique, up
to a symmetry. Thus, in fact we show that the unique tiling is a lattice
one, and moreover, the group associated with the lattice is the cyclic
group. Consider the homomorphism $\phi :%
\mathbb{Z}
^{2}\rightarrow 
\mathbb{Z}
_{k}$ given by $\phi (e_{1})=r,\phi (e_{2})=r+1$. Clearly, $\phi (u)=\Phi
(u)(\func{mod}k)$ for each $u\in 
\mathbb{Z}
^{2},$ and thus $\phi ((x,y))=\Phi ((x,y))$ for $(x,y)\in \overline{S}%
_{2},_{r}\,$ as $\Phi (\overline{S}_{2,r})=[1,r(r+1)]\,$\ and $r(r+1)<k.$ In
addition, $\phi $ is injective on $\overline{S}_{2},_{r}$ because $\Phi $
is. As $\phi $ is a homomorphism, $\phi $ is injective on $\underline{S}%
_{2,r}\,\ $\ and $\phi (\underline{S}_{2,r})=[-r(r+1),-1](\func{mod}k)=$ $%
[-r(r+1)+k,-1+k]=$ $[r(r+1)+1,2r^{2}+2r].$ In aggregate, $\phi
(S_{2,r})=\phi (\overline{S}_{2,r})\cup \phi (\underline{S}_{2,r})\cup \phi
((0,0))=[0,2r^{2}+2r]=[0,k-1].$ Thus, the restriction of $\phi $ to $S_{2,r}$
is both an injection and a surjection. Therefore, by Theorem \ref{7}, $k$
has an optimal embedding in $%
\mathbb{Z}
^{2}.$\medskip

(ii) Let $r$ be so that $\left\vert S_{2,r}\right\vert <k<\left\vert
S_{2,r+1}\right\vert ;$ that is, $2r^{2}+2r+1<k<2r^{2}+6r+5.$ We split the
proof into two cases.\medskip

(iia) $\left\vert S_{2,r}\right\vert =2r^{2}+2r+1<k\leq 2r^{2}+4r.$

\noindent Let $\phi :%
\mathbb{Z}
^{2}\rightarrow 
\mathbb{Z}
_{k}$ be the same homomorphism as above.\ Then $\phi (u)=\Phi (u)(\func{mod}%
k),$ and $\phi (\overline{S}_{2,r})=[1,r(r+1)],$ $\phi $ is injective on $%
\overline{S}_{2},_{r}$, and $\phi (\underline{\emph{S}}_{2,r})=-\phi (%
\overline{S}_{2,r})=[-r(r+1),-1](\func{mod}k)=[-r(r+1)+k,-1+k]$. As $%
k>2r^{2}+2r+1,$ we have $-r(r+1)+k>r(r+1),$ and this implies $\phi (%
\underline{\emph{S}}_{2,r})\cap \phi (\overline{S}_{2,r})=\emptyset .$ Thus $%
\phi $ is an injection on $S_{2,r}$. To finish the proof we need to show
that the restriction of $\phi $ to $S_{2,r+1}$ is a surjection. $\phi
(X)=\Phi (X)(\func{mod}k)$ yields $\phi (\overline{S}_{2,r+1})\supset \Phi (%
\overline{S}_{2,r})\cup \Phi (l_{r+1,r+1})=[1,r(r+1)]\cup \lbrack \Phi
((r+1,0)),\Phi ((0,r+1))]=$

$[1,r(r+1)]\cup \lbrack r(r+1),(r+1)(r+1)]=[1,(r+1)^{2}].$ Further, $\phi (%
\underline{\emph{S}}_{2,r})=-\phi (\overline{S}_{2,r})$ implies $\phi (%
\underline{\emph{S}}_{2,r})\supset \lbrack -(r+1)^{2},-1](\func{mod}k),$
that is, $\phi (\underline{\emph{S}}_{2,r})\supset \lbrack
-(r+1)^{2}+k,-1+k].$ However, it is $(r+1)^{2}>k-(r+1)^{2}$ because in this
case $k\leq 2r^{2}+4r.$ Therefore $\phi (S_{2,r+1})=[0,k-1],$ i.e., $\phi $
is a surjection on $S_{2,r+1}.$\medskip

(iib) $2r^{2}+4r+1\leq k<2r^{2}+6r+5.$

\noindent Consider a homomorphism $\Phi ^{\prime }:%
\mathbb{Z}
^{2}$\noindent $\rightarrow 
\mathbb{Z}
$ given by $\Phi ^{\prime }(e_{1})=r+1,\Phi ^{\prime }(e_{2})=r+2.$
Translating the results obtained above for the homomorphism $\Phi $ $\ $into
the language of $\Phi ^{\prime }$ we get that the restriction of $\Phi
^{\prime }$ to $S_{2,r+1}$ is injective, and $\Phi ^{\prime }(\overline{S}%
_{2,r+1})=[1,(r+1)(r+2)],$ $\Phi ^{\prime }(\underline{S}%
_{2,r+1})=[-(r+1)(r+2),-1].$ Further, we have $\max \Phi ^{\prime }(%
\overline{S}_{2,r})=\Phi ^{\prime }((0,r))=r(r+2)$.\medskip

\noindent Let $\phi :%
\mathbb{Z}
^{2}\rightarrow 
\mathbb{Z}
_{k}$ be a homomorphism given by $\phi (e_{1})=r+1,\phi (e_{2})=r+2$. Hence, 
$\phi (u)=\Phi ^{\prime }(u)(\func{mod}k),$ which in turn implies $\phi (%
\overline{S}_{2,r+1})=\Phi (\overline{S}_{2,r+1})=[1,(r+1)(r+2)]\supset
\lbrack 1,\frac{k-1}{2}].$ In addition, $\phi (\underline{S}_{2,r+1})=-\phi (%
\overline{S}_{2,r+1})\supset \lbrack -\frac{k-1}{2},-1](\func{mod}k)=[-\frac{%
k-1}{2}+k,-1+k]=[\frac{k+1}{2},k-1].$ In aggregate, $\phi (S_{2,r+1})=\phi (%
\overline{S}_{2,r+1})\cup \phi (\underline{S}_{2,r+1})\cup \phi
((0,0))\supset \lbrack 1,\frac{k-1}{2}]\cup \lbrack \frac{k+1}{2},k-1]\cup
\{0\}=[0,k-1].$ Thus, the restriction of $\phi $ to $S_{2,r+1}$ is
surjective. Now we prove that the restriction of $\phi $ to $S_{2,r}$ is
injective. We recall that the restriction of $\Phi ^{\prime }$ to $S_{2,r}$
is injective and $\max \Phi ^{\prime }(\overline{S}_{2,r})=\Phi ^{\prime
}(0,r)=r(r+2)$. Thus $\Phi ^{\prime }(\overline{S}_{2,r})\subset \lbrack
1,r(r+2)].$ Therefore $\phi $ is injective on $\overline{S}_{2,r}$, $\phi (%
\overline{S}_{2,r})\subset \lbrack 1,\frac{k-1}{2}].$ Further, $\phi $ is
injective on $\underline{S}_{2,r},$ and $\phi (\underline{S}_{2,r})=-\phi (%
\overline{S}_{2,r})\subset \lbrack -\frac{k-1}{2},-1](\func{mod}k)=$ $[\frac{%
k-1}{2}+k,-1+k]=[\frac{k+1}{2},k-1].$ Hence, $\phi (\overline{S}_{2,r})\cap
\phi (\underline{S}_{2,r})$ is empty.\medskip
\end{proof}

\noindent Now we prove that also in this case the results for $n\geq 3$
dramatically differ from those for $n=2.$

\begin{theorem}
\label{9} For each $n\geq 3,$ there is a $k_{n}$ so that no $k\geq k_{n}$
has an optimal embedding in $%
\mathbb{Z}
^{n}.$
\end{theorem}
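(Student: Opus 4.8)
The plan is to exploit the isoperimetric slack built into equation (\ref{B}): if $\pi(n,k)=f(n,k)$ then the level sets $G_d$ of a minimizing homomorphism must have \emph{exactly} the sizes $|S_{n,d}|-|S_{n,d-1}|$ for $1\le d\le r$, i.e.\ the restriction of $\phi$ to $S_{n,r}$ is injective. So it suffices to show that for $n\ge 3$, once $k$ (hence $r$) is large enough, \emph{no} homomorphism $\phi:\mathbb{Z}^n\to G$ onto a group of order $k$ can be injective on $S_{n,r}$. Equivalently: for large $r$, the Lee sphere $S_{n,r}$ does not pack into any abelian group of order just above $|S_{n,r}|$. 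The key observation is that $|S_{n,r}|/|S_{n,r+1}|\to 1$ as $r\to\infty$ for fixed $n$ (both are degree-$n$ polynomials in $r$ with the same leading coefficient $2^n/n!$), so the window $[\,|S_{n,r}|,|S_{n,r+1}|\,)$ for $k$ is very narrow relative to $|S_{n,r}|$; an optimal embedding would be "almost" a perfect tiling by Lee spheres $L_{n,r}$, and we want to rule that out geometrically for large $r$.

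The cleanest route is to borrow the Golomb--Welch volume/surface argument the authors promise to reproduce "in the proof of Theorem \ref{9}". First I would note that an optimal embedding of $k$ with $|S_{n,r}|\le k<|S_{n,r+1}|$ forces, via Theorem \ref{7} and Theorem \ref{6}, a lattice tiling of $\mathbb{Z}^n$ by the body $V$ consisting of $S_{n,r}$ together with $k-|S_{n,r}|$ extra points at distance $r+1$ — a body sandwiched between $S_{n,r}$ and $S_{n,r+1}$. Then I would run the classical counting estimate: in a tiling of $\mathbb{Z}^n$ by translates of such a $V$, consider a large box $[-N,N]^n$; the number of tiles meeting the box is about $(2N)^n/k$, and each boundary tile contributes $O(N^{n-1})\cdot(\text{diam }V)$ to a discrepancy term, while the "defect" of $V$ relative to a true ball shows up as follows. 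The Golomb--Welch idea is that a Lee sphere $L_{n,r}$ (a cross-polytope) has the wrong shape to tile for $n\ge 3$: if you try to cover the region near a face of one copy, the cross-polytopes of neighbours leave uncoverable "corner" gaps whose total volume grows like $r^{n-2}$, which for $r$ large exceeds the at most $|S_{n,r+1}|-|S_{n,r}|=O(r^{n-1})$ — wait, that inequality goes the wrong way, so the actual argument must be more delicate; the correct form (as in \cite{GW}) is a local argument near a vertex of a tile showing that the number of extra points available, $k-|S_{n,r}|<|S_{n,r+1}|-|S_{n,r}|$, which is $O(r^{n-1})$ with leading coefficient $2^{n-1}/(n-1)!$, is too small to repair the $\Theta(r^{n-1})$-sized but \emph{larger-constant} mismatch between adjacent cross-polytopes when $n\ge3$. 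So the heart of the matter is a precise constant comparison: the geometric deficiency per vertex-neighbourhood beats the arithmetic surplus $k-|S_{n,r}|$ for $r\ge k_n$.

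Concretely, the steps I would carry out are: (1) reduce to the existence of a lattice tiling of $\mathbb{Z}^n$ by a body $V$ with $S_{n,r}\subseteq V\subseteq S_{n,r+1}$ and $|V|=k$, using Corollary-style reasoning from Theorems \ref{7} and \ref{6}; (2) pass to the continuous picture, tiling $\mathbb{R}^n$ by the polytope $P=L_{n,r}$ plus at most $|S_{n,r+1}|-|S_{n,r}|$ unit cubes; (3) fix a tile $P_0$ and a vertex $\mathbf{v}$ of $L_{n,r}$ (a point like $(r,0,\dots,0)$), and examine the tiles forced to sit around $\mathbf{v}$: because the facets of the cross-polytope meeting at $\mathbf{v}$ make dihedral angles that, for $n\ge3$, do not allow a small number of translated cross-polytopes to fill a neighbourhood of $\mathbf{v}$, a definite-volume gap of size $c_n r^{n-2}$ (times a linear factor, giving the right order) must be filled by the extra cubes; (4) sum these gaps over the $\Theta(N^n/k)$ tiles in a big box and compare with the total number of available extra cubes, $(k-|S_{n,r}|)\cdot\Theta(N^n/k)$, to get a contradiction once $r$ (equivalently $k$) exceeds a threshold $k_n$ depending only on $n$. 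The main obstacle, and the place I would spend the most care, is step (3): making the "cross-polytopes cannot fill a vertex neighbourhood" claim quantitative with the correct power of $r$ and an explicit enough constant that it dominates $k-|S_{n,r}|<|S_{n,r+1}|-|S_{n,r}|$; this is exactly the subtle geometric estimate from \cite{GW}, and adapting it to the "sphere plus a few extra points" setting (rather than an exact Lee sphere) is the new wrinkle — one must check the extra points, wherever placed at distance $r+1$, cannot be arranged to plug the corner gaps efficiently, which follows because there are too few of them per tile.
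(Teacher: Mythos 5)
Your steps (1) and (2) coincide with the paper's reduction: Theorems \ref{7} and \ref{6} turn an optimal embedding of $k$, $|S_{n,r}|\le k<|S_{n,r+1}|$, into a lattice tiling of $\mathbb{R}^n$ by a cluster $T_K$ of unit cubes with $L_{n,r}\subseteq T_K\subset L_{n,r+1}$. The genuine gap is your step (3): the quantitative ``cross-polytopes leave corner gaps of a definite size'' claim is never established, and you yourself flag mid-argument that your first attempt at the order of growth goes the wrong way. As written, the proposal reduces the theorem to an unproved local geometric estimate, so it does not close.

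The paper's actual argument needs no local analysis at all and rests on three facts you did not invoke: (a) the regular cross-polytope $P_{n,r}$ (convex hull of $\pm(r+\tfrac12)e_i$), which is inscribed in $L_{n,r}$ and has volume $(2r+1)^n/n!$, does not tile $\mathbb{R}^n$ for $n\ge 3$ (Coxeter); (b) by \cite{GW}, a polytope that does not tile has packing efficiency $\alpha<1$ \emph{strictly}, and since all the $P_{n,r}$ are homothetic this $\alpha$ depends only on $n$; (c) the tiling by $T_K$ induces a packing by the inscribed copies of $P_{n,r}$ of density $V(P_{n,r})/V(T_K)$, and this ratio tends to $1$ as $r\to\infty$ because $V(P_{n,r})$ and $V(T_K)\in[V(L_{n,r}),V(L_{n,r+1}))$ are all degree-$n$ polynomials in $r$ with the same leading coefficient $2^n/n!$. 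For $r$ large the achieved density exceeds $\alpha$, a contradiction. Note also that your framing of the problem as ``a precise constant comparison'' between two $\Theta(r^{n-1})$ quantities is a misdiagnosis: any packing by copies of $P_{n,r}$ must leave uncovered a fixed fraction $1-\alpha>0$ of space, i.e.\ $\Theta(r^n)$ volume per tile, whereas the slack $V(T_K)-V(P_{n,r})$ is only $O(r^{n-1})$ per tile; the orders of magnitude differ, which is exactly why the qualitative statement $\alpha<1$ suffices and no delicate constants are needed. I recommend replacing your steps (3)--(4) with this global density argument.
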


\begin{proof}
This proof uses ideas developed in \cite{GW}. Let $P_{n,r}$ be the smallest
convex polytope containing the $2n$ points $\pm (r+\frac{1}{2}%
)e_{i},i=1,...,n.$ In other words, $P_{n,r}$ is the smallest convex polytope
containing $2n$ center points of ($n-1$)-dimensional extremal hyperfaces of
the Lee sphere $L_{n,r}$. Thus, $P_{2,r}$ is a square while $P_{3,r}$ is a
regular octahedron. For the $n$-dimensional hypervolume of the regular
polytope $P_{n,r}$ we have: $V(P_{n,r})=\frac{(2r+1)^{n}}{n!}.$ If there was
a tiling of $%
\mathbb{R}
^{n}$ by Lee spheres $L_{n,r}$ then this tiling would induce a packing of $%
\mathbb{R}
^{n}$ by regular polytopes $P_{n,r}$.\medskip

\noindent Assume now that an integer $k$ has an optimal embedding in $%
\mathbb{Z}
^{n}.$ Then, by Theorem \ref{7}, there is a homomorphism $\phi :%
\mathbb{Z}
^{n}\rightarrow G,$ an abelian group of order $k,$ so that the restriction
of $\phi $ to $S_{n,r}$ is injective and the restriction of $\phi $ to $%
S_{n,r+1}$ is surjective. Therefore we are able to choose a set $\ K$ of $k$
points in $%
\mathbb{Z}
^{n}$ so that $S_{n,r}\subseteq K\subset S_{n,r+1}$ and the restriction of $%
\phi $ on $K$ is a bijection. By Theorem \ref{6} $\phi $ induces a lattice
tiling of $%
\mathbb{Z}
^{n}$ by copies of $K.$ This in turn implies that there is a lattice tiling $%
\mathcal{T}$ of $%
\mathbb{R}
^{n}$ by the tile $T_{K}$ comprising unit cubes centered at points in $K.$
Since $S_{n,r}\subseteq K,$ the Lee sphere $L_{n,r}$ is a subset of $T_{K}.$
Therefore, tiling $\mathcal{T}$ induces a packing of $%
\mathbb{R}
^{n}$ by Lee spheres $L_{n,r},$ which in turn induces a packing of $%
\mathbb{R}
^{n}$ by polytopes $P_{n,r}.$\medskip

\noindent It is known, see \cite{Cox}, that regular polytopes $P_{n,k}$ do
not tile $%
\mathbb{R}
^{n}.$ Further, cf. \cite{GW}, if a polytope does not tile \ $%
\mathbb{R}
^{n},$ then the packing efficiency $\alpha $ of $%
\mathbb{R}
^{n}$ by this polytope is strictly less than $1$. As the packing of $%
\mathbb{R}
^{n}$ by copies of $P_{n,r}$ has been induced by tiling $\mathcal{T}$ of $%
\mathbb{R}
^{n}$ by the cluster of unit cubes $T_{K}$, $\frac{V(P_{n,r})}{V(T_{K})}\leq
\alpha .$ Therefore, there is no tiling $\mathcal{T}{\normalsize \ }$of $%
\mathbb{R}
^{n}$ by a tile $T_{k}$ for 
\begin{equation}
\frac{V(P_{n,r})}{V(T_{K})}>\alpha .  \label{13}
\end{equation}%
However, for the volume $V(T_{k})$ we have $V(L_{n,r})\leq
V(T_{k})<V(L_{n,r+1}).$ Therefore, for $n$ fixed$,$

\[
\lim_{r\rightarrow \infty }\frac{V(P_{n,r})}{V(L_{n,r})}=\lim_{r\rightarrow
\infty }\frac{V(P_{n,r})}{V(T_{K})}=\lim_{r\rightarrow \infty }\frac{%
V(P_{n,r})}{V(L_{n,r+1})}=1 
\]

\noindent as the coefficient at the leading term of both polynomials $%
V(L_{n,r})$ and $V(L_{n,r+1})$ is $\frac{2^{n}}{n!}r^{n}.$ Here we recall,
see e.g. \cite{GW}, that the volume of $L_{n,r}$ equals $\dsum\limits_{k\geq
0}2^{k}\binom{n}{k}\binom{r}{k}$. Thus, there is a $k_{n}$ so that, for all $%
k>k_{n},$ we have $\frac{V(P_{n,r})}{V(T_{K})}>\alpha .$ Hence, by (\ref{13}%
), for $k>k_{n},$ there is no tiling of $%
\mathbb{R}
^{n}$ by $T_{k},$ that is, there is no tiling of $%
\mathbb{Z}
^{n}$ by the set $K$ in this case. This in turn implies that $k=\left\vert
K\right\vert $ does not have an optimal embedding in $%
\mathbb{Z}
^{n}.$
\end{proof}

\noindent At the end of this section we prove the non-existence of a linear $%
PL(n,e)$ code for some new values of parameters.

\begin{theorem}
There is no linear $PL(n,2)$ code for $7\leq n\leq 12.$
\end{theorem}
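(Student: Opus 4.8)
The plan is to invoke Corollary~\ref{CC}: a linear $PL(n,2)$ code exists if and only if the number $k:=|S_{n,2}|=2n^{2}+2n+1$ has an optimal embedding in $\mathbb{Z}^{n}$, so it is enough to rule this out for $n=7,\dots,12$. Since $k$ is itself a sphere size, the ``$|S_{n,r}|=k$'' clause of Theorem~\ref{7} recasts the question as: is there an abelian group $G$ of order $k$ and a homomorphism $\phi:\mathbb{Z}^{n}\rightarrow G$ whose restriction to $S_{n,2}$ is a bijection? First I would pin down $G$. Computing, $k=113,145,181,221,265,313$ for $n=7,\dots,12$; each value is square-free with at most two prime factors ($145=5\cdot 29$, $221=13\cdot 17$, $265=5\cdot 53$, and $113,181,313$ are prime), so the only abelian group of order $k$ is cyclic: $G=\mathbb{Z}_{k}$. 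Writing $a_{i}:=\phi(e_{i})$ and applying $\phi$ coordinatewise, the hypothesis becomes that the $2n^{2}+2n+1$ elements
\[
0,\qquad \pm a_{i}\ (1\le i\le n),\qquad \pm 2a_{i}\ (1\le i\le n),\qquad \pm a_{i}\pm a_{j}\ (1\le i<j\le n)
\]
are pairwise distinct in $\mathbb{Z}_{k}$, hence exhaust it. This configuration is unchanged if we multiply all $a_{i}$ by a unit, permute the $a_{i}$, or flip the sign of any $a_{i}$.

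The structural tool I would use is a character identity. For a nontrivial character $\chi$ of $\mathbb{Z}_{k}$ set $c_{i}:=\chi(a_{i})+\overline{\chi(a_{i})}\in[-2,2]$ (a real number), $P:=\sum_{i}c_{i}$, $Q:=\sum_{i}c_{i}^{2}$. Summing $\chi\circ\phi$ over the list above and using $\sum_{g\in\mathbb{Z}_{k}}\chi(g)=0$, $\chi(a_{i})^{2}+\overline{\chi(a_{i})}^{2}=c_{i}^{2}-2$, and $\sum_{i<j}(\chi(a_{i})+\overline{\chi(a_{i})})(\chi(a_{j})+\overline{\chi(a_{j})})=\sum_{i<j}c_{i}c_{j}=\frac12(P^{2}-Q)$, one gets $0=1+P+(Q-2n)+\frac12(P^{2}-Q)$, that is,
\[
(P+1)^{2}+Q=4n-1\qquad\mbox{for every nontrivial character }\chi ;
\]
in particular $0\le Q\le 4n-1$ and $|P+1|\le\sqrt{4n-1}$.

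When $k$ is composite I would exploit this through a character $\chi$ of small order $p\mid k$: then each $c_{i}$ lies in the fixed finite set $\{2\cos(2\pi j/p):0\le j\le\frac{p-1}{2}\}$, and $(P+1)^{2}+Q=4n-1$ becomes a polynomial relation among the multiplicities $n_{v}=\#\{i:c_{i}=v\}$ subject to $\sum_{v}n_{v}=n$. As the right-hand side is rational while the nontrivial values $v$ are Galois-conjugate over $\mathbb{Q}$, the irrational part of the relation must vanish, collapsing the problem to a small integer Diophantine system that one checks has no solution. For $p=5$ (the cases $n=8$ and $n=11$) this is immediate: $c_{i}\in\{2,\frac{\sqrt5-1}{2},\frac{-\sqrt5-1}{2}\}$, vanishing of the $\sqrt5$-part leaves only a couple of integer scenarios, each quickly ruled out by a quadratic with non-square discriminant. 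The case $n=10$ is disposed of in the same spirit using an order-$13$ character (the computation now taking place in $\mathbb{Q}(\zeta_{13})^{+}$, but still finite).

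For the prime cases $n=7,9,12$ this shortcut is unavailable, since $\mathbb{Z}_{k}$ is the only group and every nontrivial $\chi$ has order the large prime $k$, so $(P+1)^{2}+Q=4n-1$ no longer lives over a small field. Here I would normalize $a_{1}=1$ and $2\le a_{2}<a_{3}<\dots<a_{n}\le\frac{k-1}{2}$ (legitimate by the symmetries above together with the distinctness of the $\pm a_{i}$) and run a backtracking search that adjoins one $a_{i}$ at a time, pruning the moment $\phi$ stops being injective on the Lee ball of radius $2$ spanned by the coordinates fixed so far (equivalently, the moment a collision appears in the partial list), and also discarding a partial tuple once, for some test character, $(P+1)^{2}+Q=4n-1$ has become unattainable. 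The conclusion will be that no complete tuple survives, contradicting the assumption. The main obstacle is the size of this search for $n=12$, where $k=313$ and the crude count $\binom{155}{11}$ of normalized tuples is enormous; the computation is feasible only because injectivity of $\phi$ on the growing Lee ball is so restrictive after a few coordinates are fixed that the search tree collapses very early, and the delicate part is organizing---and certifying---this computer-aided enumeration.
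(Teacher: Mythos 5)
Your reduction is the same as the paper's: invoke Corollary \ref{CC} and Theorem \ref{7}, note that $k_n=2n^2+2n+1$ is square-free for $7\le n\le 12$ so that only the cyclic group $\mathbb{Z}_{k_n}$ need be considered, and then show that no $n$-tuple $(a_1,\dots,a_n)$ makes $\{0,\pm a_i,\pm 2a_i,\pm a_i\pm a_j\}$ a complete set of residues. From that point the paper proceeds purely by a normalized backtracking computer search (sort the $a_i$, take representatives in $[1,(k_n-1)/2]$, prune as soon as a collision appears in a partial tuple), with no character theory at all. Your character identity $(P+1)^2+Q=4n-1$ is a genuine addition and it is correct -- I checked the derivation, and also verified that for $n=8$ ($k=145$, order-$5$ character) and $n=11$ ($k=265$, order-$5$ character) the resulting Diophantine system in the multiplicities really has no solution, so those two cases are settled by hand, which the paper cannot do. Two caveats. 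First, your claim that $n=10$ ($k=221$) falls to an order-$13$ character is asserted but not carried out; the identity is only a necessary condition, so there is no a priori guarantee the degree-$6$ computation in $\mathbb{Q}(\zeta_{13})^{+}$ yields a contradiction, and if it does not you are back to the search (which is where the paper is anyway). Second, for the prime cases $n=7,9,12$ your plan is essentially identical to the paper's computer-aided proof; your extra normalization $a_1=1$ is legitimate (it uses that $k_n$ is prime, so $a_1$ is a unit) and cuts the search by a factor of about $k_n/2$ relative to the paper's $\binom{k_n/2}{n}$-type bound, and the character relation gives additional pruning, but the argument remains a certified enumeration rather than a conceptual proof, exactly as in the paper. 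In short: same skeleton, with a worthwhile algebraic refinement that eliminates some cases and accelerates the rest, at the cost of one unverified subclaim ($n=10$) that would need to be checked or abandoned in favor of the search.
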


\begin{proof}
By Corollary \ref{CC} it suffices to prove that, for $n=7,...,??,$ the
number $k_{n}=\left\vert S_{n,2}\right\vert =2n^{2}+2n+1$ does not have an
optimal embedding in $%
\mathbb{Z}
^{n}.$ For $\ n=7,...,12,$ $k_{n}$ is a square free number. Thus each
abelian group of the order $k_{n}$ is isomorphic to the cyclic group $%
\mathbb{Z}
_{k_{n}}$. We need to show that there is no homomorphism $\phi :%
\mathbb{Z}
^{n}\rightarrow 
\mathbb{Z}
_{k_{n}}$ such that the restriction of $\phi $ to $S_{n,2}$ is bijective.
Each homomorphism $\phi :%
\mathbb{Z}
^{n}\rightarrow 
\mathbb{Z}
_{k_{n}}$ is determined by the values of $\phi (e_{i}),i=1,...,n,$ and $%
\left\vert \phi (S_{n,2})\right\vert =\left\vert \{\pm \phi (e_{i}),\pm \phi
(e_{i})\pm \phi (e_{j});1\leq i\leq j\leq n\}\right\vert <\left\vert
S_{n,2}\right\vert $ if $\phi $ is not a bijection on $S_{n,2}.$ Hence, it
is sufficient to show that for each $n$-tuple $(g_{1},...,g_{n})$ of
elements in $%
\mathbb{Z}
_{k_{n}}$ 
\begin{equation}
\left\vert \{\pm g_{i},\pm g_{i}\pm g_{j};1\leq i\leq j\leq n\}\right\vert
<\left\vert S_{n,2}\right\vert .  \label{N}
\end{equation}%
\noindent This can be proved by a \textit{brute force} computer test, where
all $(k_{n})^{n}\approx (2n^{2})^{n}$ $n$-tuples of elements in $%
\mathbb{Z}
_{k_{n}}$ are shown to satisfy (\ref{N}). In what follows we exhibit a way
how to substantially reduce the computational complexity of the test.\medskip

\noindent Assume that there exists a homomorphism $\phi :%
\mathbb{Z}
^{n}\rightarrow 
\mathbb{Z}
_{k_{n}}$ such that the restriction of $\phi $ to $S_{n,2}$ is a bijection.
Then there would have to be such a homomorphism $\phi ^{\prime }$ with $\phi
^{\prime }(e_{j})\leq \frac{k_{n}}{2}$ (if $\phi (e_{i})>\frac{k_{n}}{2}$ we
set $\phi ^{\prime }(e_{i})=-\phi (e_{i})$ ) for all $1\leq i\leq n,$ and
also $\phi ^{\prime }(i)<\phi ^{\prime }(j)$ for all $1\leq i<j\leq n.$
Therefore, we need to show that (\ref{N}) is satisfied by any of $\binom{%
k_{n}/2}{n}$ $n$-tuples $(g_{1},...,g_{n}),$ where $1\leq g(i)\leq \frac{%
k_{n}}{2},$ and $g(i)<g(j)$ for all $1\leq i<j\leq n$. Using the Stirling
formula we have $\binom{k_{n}/2}{n}\approx \frac{(en)^{n}}{\sqrt{2\pi n}}.$
To reduce the computational complexity even further we used a backtracking
algorithm that enables to check (\ref{N}) only for a portion of $\binom{%
k_{n}/2}{n}$ of $n$-tuples. The algorithm is based on the following two
simple observations: First, let $(g_{1},...,g_{n})$ be a $n$-tuple such that
for some $m<n$ 
\begin{equation}
\left\vert \{\pm g_{i},\pm g_{i}\pm g_{j};1\leq i\leq j\leq m\}\right\vert
<\left\vert S_{m,2}\right\vert ,  \label{NN}
\end{equation}%
then $(g_{1},...,g_{n})$ satisfies (\ref{N}) as well. Second, it suffices to
choose $g_{m}>g_{m-1}$ and $g_{m}\in T_{m}=[1,...,\frac{k_{n}-1}{2}]-\{\pm
g_{i},\pm g_{i}\pm g_{j};1\leq i\leq j\leq m-1\},1\leq m\leq n,$ as
otherwise (\ref{NN}) is trivially satisfied. The algorithm comprises $n$
nested cycles. In the $m$-th cycle we choose the $m$-th element of the tuple 
$(g_{1},...,g_{n}).$ It suffices to choose $g_{m}>g_{m-1}$ and $g_{m}\in
T_{m},$ see above. We test $(g_{1},...,g_{m})$ for (\ref{NN}). If (\ref{NN})
is satisfied, we replace $g_{m}$ by the next element from $T_{m},$ if none
is available, we backtrack to the previous cycle that chooses $g_{m-1}.$
Otherwise, if there is an equality in (\ref{NN}), in the next nested cycle
we choose $g_{m+1}>g_{m}$ from the set $T_{m+1}$.

\noindent Thus, in aggregate, the test (\ref{NN}) is performed at most 
\[
\frac{1}{n!}\dprod\limits_{m=1}^{n}\left\vert T_{m}\right\vert =\frac{1}{n!}%
\dprod\limits_{m=1}^{n}(\frac{\left\vert S_{n,2}\right\vert -1}{2}-\frac{%
\left\vert S_{m-1,2}\right\vert -1}{2})=\frac{1}{n!}\dprod%
\limits_{m=0}^{n-1}[(n^{2}+n)-(m^{2}+m)]= 
\]%
\[
\frac{1}{n!}\dprod\limits_{m=0}^{n-1}(n+m+1)(n-m)=\frac{(2n)!}{n!}\approx 
\sqrt{2}(\frac{4n}{e})^{n} 
\]
times. We note that the factor $\frac{1}{n!}$ stands for the fact that from
all $n$-tuples $(g_{1},...,g_{n})$ that differ only by the order of elements
we test only the one with $g(i)<g(j)$ for all $1\leq i<j\leq n$. Clearly,
this is a crude upper bound because if $(g_{1},...,g_{m})$ does not satisfy (%
\ref{NN}), then we skip testing all $t$-tuples $(g_{1},...,g_{m},...,g_{t})$
for each $t>m.$ The backtracking algorithm described above was \ used to
prove the statement.$\medskip $
\end{proof}

\begin{remark}
We point out that the backtracking algorithm described above can be used any
time when $k_{n}=\left\vert S_{n,2}\right\vert =2n^{2}+2n+1$ is a square
free number, and sufficient computing power is available. E.g., for all $%
n\leq 19,$ the number $k_{n}$ is square free. However, we have verified the
statement of the theorem only for $n\leq 12,$ as for the bigger values of $n$
the computation has not been feasible for our computer lab. We note that the
computation can easily be distributed over several machines as verifying (%
\ref{NN}) for distinct $n$-tuples is independent on each other. In fact we
used this distributed approach for all $n\geq 9$
\end{remark}

\noindent We end this section by a conjecture related to group embeddings.
If true, it would be much simpler to determine the value of $\phi (n,k).$

\begin{conjecture}
For each $n\geq 2,$ and $k>0,$ the value $\pi (n,k)$ is attained by the
cyclic group $%
\mathbb{Z}
_{k}.$
\end{conjecture}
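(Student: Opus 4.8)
Since the conjecture concerns the minimum over all abelian groups of order $k$, the plan is first to recast $\pi(n,G)$ in a form that makes the role of the group structure transparent, and then to transfer any embedding into the cyclic group without increasing its value. A surjective homomorphism $\phi:\mathbb{Z}^{n}\rightarrow G$ is determined by the generators $g_{i}=\phi(e_{i})$, and $\pi(n,G,\phi,g)=\min\{\|a\|_{1}:a\in\mathbb{Z}^{n},\ \sum_{i}a_{i}g_{i}=g\}$ is precisely the graph distance $\mathrm{dist}(O,g)$ from the identity in the Cayley graph $\mathrm{Cay}(G,\{\pm g_{1},\dots,\pm g_{n}\})$, since a weight-$d$ integer combination is the same as a length-$d$ walk using the generators $\pm g_{i}$. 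Hence $\pi(n,G,\phi)=\sum_{g\in G}\mathrm{dist}(O,g)$ is the transmission of a vertex (equivalently $k$ times the mean distance) in a $2n$-regular abelian Cayley graph, $\pi(n,G)$ is the least such transmission over all choices of $n$ generators, and the conjecture asserts that, for fixed order $k$ and degree $2n$, a circulant graph minimizes the mean distance among all abelian Cayley graphs. For $n=2$ the conjecture is already settled: Theorem \ref{11} exhibits a homomorphism $\phi:\mathbb{Z}^{2}\rightarrow\mathbb{Z}_{k}$ attaining the lower bound $f(2,k)$ of Theorem \ref{7}, so $\pi(2,k)=f(2,k)=\pi(2,\mathbb{Z}_{k})$ and the cyclic group is optimal for every $k$.

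For $n\geq 3$ I would argue by induction on the number $s$ of nontrivial invariant factors in the decomposition $G\cong\mathbb{Z}_{d_{1}}\oplus\cdots\oplus\mathbb{Z}_{d_{s}}$ with $d_{1}\mid\cdots\mid d_{s}$. The case $s=1$ is the cyclic group itself, so it suffices to prove a \emph{merging lemma}: if $G$ has $s\geq 2$ invariant factors, then there is an abelian group $G'$ of order $k$ with $s-1$ invariant factors and $\pi(n,G')\leq\pi(n,G)$. Iterating collapses $G$ to $\mathbb{Z}_{k}$. To build the required homomorphism into $G'$ one starts from optimal generators $g_{1},\dots,g_{n}$ of $G$ and lifts them along a surjection $G'\twoheadrightarrow G$ that merges two factors, choosing the lifts so that each coset of $\ker(G'\to G)$ contributes representatives of controlled $\ell_{1}$ weight. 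The mechanism by which this should help is captured by the exact formula (\ref{B}): the excess over $f(n,k)$ equals $\sum_{d\leq r}(r+1-d)\varepsilon_{d}+\sum_{d\geq r+2}(d-r-1)|G_{d}|$, a weight that penalizes group elements represented only at large Lee distance. The torsion relations forced by a small invariant factor $d_{1}>1$ create early collisions among low-weight integer combinations of the generators, inflating the deficiencies $\varepsilon_{d}$ for small $d$; passing to a group with a larger top factor removes some of these forced collisions and should redistribute the mass of the distance profile $(|G_{d}|)_{d}$ toward smaller $d$.

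The heart of the argument, and the step I expect to be the main obstacle, is therefore the merging lemma itself: showing that the transfer never increases $\sum_{g}\mathrm{dist}(O,g)$. A coset-by-coset comparison is not available, because the integer combinations realizable in $G$ and in $G'$ are governed by different kernel lattices $\ker\phi$ and $\ker\phi'$, so the minimum-weight representatives move globally rather than locally. The natural way to control this is a \emph{majorization} statement: prove that the cumulative profile $\sum_{d\leq t}|G'_{d}|$ of the merged group dominates that of $G$ for every threshold $t$, whence, by the convexity of the weights $r+1-d$ and $d-r-1$ appearing in (\ref{B}), the excess cannot grow. Establishing this domination is delicate precisely because $|S_{n,d}|$ grows super-linearly in $d$, so a gain of low-weight representatives in one regime can in principle be offset by losses elsewhere; ruling this out uniformly in $n$ and $k$ is the crux. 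If a clean majorization proof resists, a fallback is to combine the asymptotic non-existence result of Theorem \ref{9}, which confines genuine excesses to the range $k\geq k_{n}$, with an explicit cyclic construction generalizing the one in Theorem \ref{11}, reducing the conjecture to a finite comparison for each residual range of $k$; but converting "finitely many $k$ per $n$'' into a statement uniform over all $n$ would itself require the structural majorization, so the obstacle is genuine.
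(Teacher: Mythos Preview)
The statement you are attempting to prove is listed in the paper as a \emph{conjecture}; the authors give no proof and explicitly present it as open (``If true, it would be much simpler to determine the value of $\pi(n,k)$''). So there is no paper proof to compare against, and the relevant question is whether your proposal actually establishes the claim. It does not: your write-up is an outline of a strategy whose central step---the ``merging lemma'' asserting that one can pass from $G$ with $s$ invariant factors to some $G'$ with $s-1$ invariant factors without increasing $\pi(n,\cdot)$---is never proved. You yourself flag the majorization inequality $\sum_{d\le t}|G'_d|\ge\sum_{d\le t}|G_d|$ as ``the crux'' and ``the obstacle,'' and then leave it unresolved. The $n=2$ paragraph is fine (it simply quotes Theorem~\ref{11}), but for $n\ge 3$ what remains is a wish, not an argument.

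There is also a concrete technical slip in the mechanism you sketch. You propose to ``lift [the generators] along a surjection $G'\twoheadrightarrow G$ that merges two factors,'' with $|G'|=|G|=k$. But a surjective homomorphism between finite abelian groups of the same order is an isomorphism, so no nontrivial ``merging'' surjection $G'\twoheadrightarrow G$ exists. Any genuine transfer of an embedding from $G$ to a strictly more cyclic $G'$ must proceed differently (for instance via a common cover $\mathbb{Z}^n$ and a new choice of kernel lattice), and once you do that, the comparison of distance profiles is exactly the hard, unproved majorization you identified. As it stands, the proposal neither matches a proof in the paper (there is none) nor supplies one.
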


\section{Quasi-Perfect Lee codes}

\noindent We start with a theorem that shows how the quasi-perfect codes
relate to optimal embeddings introduced in the previous section. In fact it
turns out that quasi-perfect Lee codes are a natural extension of perfect
Lee codes.

\begin{theorem}
\label{15} A linear $QPL(n,e)$ code exists if and only if there is a number $%
k,$ $\left\vert S_{n,e}\right\vert \leq k<\left\vert S_{n,e+1}\right\vert ,$
having an optimal embedding in $%
\mathbb{Z}
^{n}.$
\end{theorem}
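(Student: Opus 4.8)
The plan is to reduce the existence of a linear $QPL(n,e)$ code to the existence of an optimal embedding of some number $k$ in the range $|S_{n,e}| \le k < |S_{n,e+1}|$, exploiting the fact that both notions translate — via Theorems \ref{6} and \ref{7} — into statements about homomorphisms $\phi:\mathbb{Z}^n\to G$ whose restriction to a Lee sphere behaves well. The key observation is that a linear $QPL(n,e)$ code $\mathcal{L}$ is, by definition, a lattice $\mathcal{L}$ with minimum Lee distance $\ge 2e+1$ between codewords and such that every word is within Lee distance $e+1$ of $\mathcal{L}$. The first condition says the translates $S_{n,e}+l$, $l\in\mathcal{L}$, are pairwise disjoint (a packing), and the second says the translates $S_{n,e+1}+l$ cover $\mathbb{Z}^n$. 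So a linear $QPL(n,e)$ code is exactly a lattice packing by $S_{n,e}$ that is simultaneously a lattice covering by $S_{n,e+1}$.

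First I would make precise the dictionary between such a lattice and a quotient homomorphism. Given the lattice $\mathcal{L}$, set $G=\mathbb{Z}^n/\mathcal{L}$, a finite abelian group (finite because $\mathcal{L}$ has full rank — which follows from the covering condition), and let $\phi:\mathbb{Z}^n\to G$ be the natural projection. The packing condition ``$S_{n,e}+l$ pairwise disjoint'' is equivalent to $\phi$ being injective on $S_{n,e}$; the covering condition ``$S_{n,e+1}+l$ cover $\mathbb{Z}^n$'' is equivalent to $\phi$ being surjective when restricted to $S_{n,e+1}$. Put $k=|G|$. Since $S_{n,e}\hookrightarrow G$ injectively we get $k\ge |S_{n,e}|$, and since $S_{n,e+1}\twoheadrightarrow G$ we get $k\le |S_{n,e+1}|$; moreover $k<|S_{n,e+1}|$ unless $\phi$ is also injective on $S_{n,e+1}$, which would make $\mathcal{L}$ a genuine $PL(n,e+1)$ code — a case that still falls inside the stated range, so I would simply allow $|S_{n,e}|\le k<|S_{n,e+1}|$ and handle the boundary $k=|S_{n,e+1}|$ by noting a perfect $PL(n,e+1)$ code is in particular a $QPL(n,e)$ code and corresponds to the optimal embedding of $|S_{n,e+1}|$, consistent with the ``$|S_{n,r}|=k$'' clause of Theorem \ref{7}. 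Now $\phi:\mathbb{Z}^n\to G$ is injective on $S_{n,e}$ (so on $S_{n,r}$ with $r=e$, as $|S_{n,e}|\le k<|S_{n,e+1}|$ pins $r=e$) and surjective on $S_{n,e+1}=S_{n,r+1}$: by the equality clause of Theorem \ref{7}, $\pi(n,k)=f(n,k)$, i.e. $k$ has an optimal embedding in $\mathbb{Z}^n$.

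Conversely, suppose some $k$ with $|S_{n,e}|\le k<|S_{n,e+1}|$ has an optimal embedding in $\mathbb{Z}^n$. By Theorem \ref{7} there is an abelian group $G$ of order $k$ and $\phi:\mathbb{Z}^n\to G$ with $\phi|_{S_{n,e}}$ injective and $\phi|_{S_{n,e+1}}$ surjective (here again $r=e$ because $|S_{n,e}|\le k<|S_{n,e+1}|$). Set $\mathcal{L}=\ker\phi$, a sublattice of $\mathbb{Z}^n$ of index $k$, hence a full-rank lattice. Injectivity of $\phi$ on $S_{n,e}$ gives that distinct codewords are at Lee distance $>2e$, i.e. $\ge 2e+1$, so $\mathcal{L}$ satisfies condition (i) of the $PL(n,e)$ definition; surjectivity of $\phi$ on $S_{n,e+1}$ means every coset of $\mathcal{L}$ meets $S_{n,e+1}$, i.e. every word of $\mathbb{Z}^n$ is within Lee distance $e+1$ of $\mathcal{L}$, which is condition (iia). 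Hence $\mathcal{L}$ is a linear $QPL(n,e)$ code, completing both directions.

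The only genuinely delicate point — and the one I would be most careful about — is the bookkeeping around whether the ``$r$'' of Theorem \ref{7} really equals $e$, and the treatment of the two endpoint situations $k=|S_{n,e}|$ (where the optimal-embedding condition becomes ``$\phi$ is a bijection on $S_{n,e}$'', giving an honest $PL(n,e)$ code, which is trivially a $QPL(n,e)$ code) and $k=|S_{n,e+1}|$. I would state the theorem and proof so that the clause ``$|S_{n,e}|\le k<|S_{n,e+1}|$'' is exactly matched by ``$r=e$'' in Theorem \ref{7}, and dispatch the perfect-code endpoints with a one-line remark rather than by altering the range. Everything else is a routine transcription of the lattice-tiling $\leftrightarrow$ group-homomorphism correspondence already established, specialized from ``bijection on $V$'' (Theorem \ref{6}, perfect case) to the asymmetric ``injective on $S_{n,e}$, surjective on $S_{n,e+1}$'' that characterizes quasi-perfection.
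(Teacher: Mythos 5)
Your proposal is correct and follows essentially the same route as the paper: both directions pass through the dictionary $\mathcal{L}=\ker \phi $, translating the packing condition into injectivity of $\phi $ on $S_{n,e}$ and the covering condition into surjectivity of $\phi $ on $S_{n,e+1}$, and then invoke Theorem \ref{7}. Your remark about the boundary case $k=\left\vert S_{n,e+1}\right\vert $ (when the code is in fact a $PL(n,e+1)$ code) flags a point the paper's own proof silently skips --- it asserts $\left\vert G\right\vert <\left\vert S_{n,e+1}\right\vert $ while only proving $\left\vert G\right\vert \geq \left\vert S_{n,e}\right\vert $ --- so that observation is a welcome extra rather than a defect.
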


\begin{proof}
Let $\phi :%
\mathbb{Z}
^{n}\rightarrow G,$ an abelian group $G$ of order $k,$ be a homomorphism so
that the restriction of $\phi $ to $S_{n,e}$ is an injection and the
restriction of $\phi $ to $S_{n,e+1}$ is a surjection. Choose a set $K,$ $%
\left\vert K\right\vert =k,$ of words in $%
\mathbb{Z}
^{n}$ so that the restriction of $\phi $ to $K$ is a bijection and $%
S_{n,e}\subseteq K\subset S_{n,e+1}.$ By Theorem \ref{6}, $\phi $ induces a
lattice tiling $\mathcal{T}$ of $%
\mathbb{Z}
^{n}$ by $K.$ Consider the lattice $\mathcal{L=\ker }(\phi ).$ For any two
words \thinspace $u,v\in \mathcal{L}$ we have $\rho _{L}(u,v)\geq 2e+1$ as $%
S_{n,e}\subseteq K.$ Since $\mathcal{T}$ is a tiling, to each word $w\in 
\mathbb{Z}
^{n}\,$there exists a copy $K_{w}$ of $K$ so that $w\in K_{w}$, and $%
K\subset S_{n,e+1}$ guaranties that $w$ is at distance $\leq e+1$ from at
least one word in $\mathcal{L}$. Thus, $\mathcal{L}$ constitutes a linear $%
QPL(n,e)$ code\medskip

\noindent Now, assume that $\mathcal{L\subset }%
\mathbb{Z}
^{n}$ is a linear $QPL(n,e)$ code. We will prove that there is a number $%
k,\left\vert S_{n,e}\right\vert \leq k<\left\vert S_{n,e+1}\right\vert ,$ so
that $k$ has an optimal embedding in $%
\mathbb{Z}
^{n}.$ It is well known \ that $%
\mathbb{Z}
^{n}/\mathcal{L}\simeq G,$ where $G$ is an abelian group. We show that $G$
has an optimal embedding, and $\left\vert S_{n,e}\right\vert \leq \left\vert
G\right\vert <\left\vert S_{n,e+1}\right\vert .$ Consider the natural
homomorphism $\phi :%
\mathbb{Z}
^{n}\rightarrow G$. Then $\ker (\phi )=\mathcal{L}$. Assume that there are
two words $u,v\in S_{n,e},u\neq v,$ such that $\phi (u)=\phi (v).$ Set $%
w=u-v.$ Then $\phi (w)=\phi (u-w)=\phi (u)-\phi (v)=0.$ Thus, $w\in \mathcal{%
L}$ is a codeword. However, this is a contradiction as $\rho _{L}(w,O)=\rho
_{L}(u,v)<2e+1$ which contradicts the condition (i) in the definition of $%
QPL(n,e)$ code. Hence we proved that $\phi $ is an injection on $S_{n,e},$
which at the same time implies that $\left\vert G\right\vert \geq \left\vert
S_{n,e}\right\vert .$\medskip

\noindent To prove that the restriction of $\phi $ to $S_{n,e+1}$ is
surjective, consider an element $g\in G$. Let $u$ be a word in $%
\mathbb{Z}
^{n}$\noindent\ with $\phi (u)=g.$ As $\mathcal{L}$ is a $QPL(n,e)$ code,
for each word $u\in 
\mathbb{Z}
^{n},$ there is a codeword $w\in \mathcal{L}$ so that $\rho _{L}(u,w)\leq
e+1,$ and for $u-w\,\ $\ we have $\rho _{L}(O,u-w)\leq e+1,$ i.e., $u-w\in
S_{n,e+1}.$ As $\phi $ is a homomorphism, $\phi (u-w)=\phi (u)-\phi
(w)=g-0=g.$\medskip
\end{proof}

\noindent In practical applications we deal with finite perfect Lee codes
over the alphabet $%
\mathbb{Z}
_{p}^{n}.\,\ $These codes are usually denoted as $PL(n,2,p)$ codes ($%
QPL(n,e,p)$ codes). As a corollary of Theorem \ref{15} we get:

\begin{corollary}
There is a linear $QPL(2,e,k)$ code for each $\left\vert S_{2,e}\right\vert
\leq k<\left\vert S_{2,e+1}\right\vert .$
\end{corollary}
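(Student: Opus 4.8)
The plan is to reduce the infinite construction underlying Theorem~\ref{15} to the finite torus $\mathbb{Z}_{k}^{2}$. By Theorem~\ref{11}, for every $k$ with $\left\vert S_{2,e}\right\vert \leq k<\left\vert S_{2,e+1}\right\vert$ the cyclic group $\mathbb{Z}_{k}$ has an optimal embedding in $\mathbb{Z}^{2}$; I fix a homomorphism $\phi :\mathbb{Z}^{2}\rightarrow \mathbb{Z}_{k}$ realizing it, so that the restriction of $\phi$ to $S_{2,e}$ is injective and the restriction of $\phi$ to $S_{2,e+1}$ is surjective. By the argument of Theorem~\ref{15}, the lattice $\Lambda =\ker \phi$ is then a linear $QPL(2,e)$ code in $\mathbb{Z}^{2}$: distinct points of $\Lambda$ are at Lee distance at least $2e+1$, and every word of $\mathbb{Z}^{2}$ lies at Lee distance at most $e+1$ from $\Lambda$.

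Next I pass to $\mathbb{Z}_{k}^{2}$. Since $\mathbb{Z}_{k}$ has exponent dividing $k$, we have $k\phi (e_{i})=0$ for $i=1,2$, hence $k\mathbb{Z}^{2}\subseteq \Lambda$, so $\phi$ factors through a homomorphism $\overline{\phi }:\mathbb{Z}_{k}^{2}\rightarrow \mathbb{Z}_{k}$, which is still surjective because $\phi$ is. Put $\mathcal{C}=\ker \overline{\phi }=\Lambda /k\mathbb{Z}^{2}$, a subgroup of $\mathbb{Z}_{k}^{2}$ of order $k^{2}/k=k$; this is the desired linear $QPL(2,e,k)$ code. For the minimum-distance requirement: if $\overline{u}\neq \overline{v}$ were elements of $\mathcal{C}$ at toroidal Lee distance $\leq 2e$, I would lift their difference to $w\in \mathbb{Z}^{2}$ with $\rho _{L}(w,O)\leq 2e$; then $w\in \Lambda +k\mathbb{Z}^{2}=\Lambda$ and $w\neq O$, contradicting the minimum distance of $\Lambda$. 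For the covering requirement: given $\overline{u}\in \mathbb{Z}_{k}^{2}$, I lift it to $u\in \mathbb{Z}^{2}$, pick $\lambda \in \Lambda$ with $\rho _{L}(u,\lambda )\leq e+1$, and observe that the class of $\lambda$ lies in $\mathcal{C}$ and is within toroidal Lee distance $\leq \rho _{L}(u,\lambda )\leq e+1$ of $\overline{u}$.

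The one place that needs care is the comparison between the toroidal Lee metric on $\mathbb{Z}_{k}^{2}$ and the ordinary Lee metric on $\mathbb{Z}^{2}$: one must check that a coset of $k\mathbb{Z}^{2}$ of toroidal Lee weight $\leq 2e$ admits a representative of Lee weight $\leq 2e$ in $\mathbb{Z}^{2}$, and that the toroidal distance of two classes never exceeds the Lee distance of any pair of their lifts. Both follow at once from the coordinatewise definition of the Lee metric on $\mathbb{Z}_{k}$, and the hypothesis $k\geq \left\vert S_{2,e}\right\vert =2e^{2}+2e+1$ leaves ample room (all balls actually used have radius at most $e+1$ and $2(e+1)<k$), so no wrap-around interferes. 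This mild bookkeeping is the only obstacle; no input beyond Theorems~\ref{11} and~\ref{15} is required.
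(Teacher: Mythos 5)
Your proposal is correct and follows essentially the same route as the paper: invoke Theorem \ref{11} to get an optimal embedding of $\mathbb{Z}_{k}$ in $\mathbb{Z}^{2}$, apply Theorem \ref{15} to obtain a linear $QPL(2,e)$ code as $\ker \phi$, observe that this lattice contains $k\mathbb{Z}^{2}$ (the paper phrases this via the period $p=\mathrm{l.c.m.}\{\mathrm{ord}(\phi(e_{i}))\}$ dividing $k$), and descend to $\mathbb{Z}_{k}^{2}$. Your explicit verification of the toroidal-metric comparison only spells out what the paper leaves implicit in the phrase ``induces a linear $QPL(n,e,k)$ code.''
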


\begin{proof}
By Theorem \ref{11}, each $k\geq 1$ has an optimal embedding in $%
\mathbb{Z}
^{2},$ and by Theorem \ref{15} there is a linear $QPL(n,e)$ code $\mathcal{L}
$, where $%
\mathbb{Z}
^{n}/\mathcal{L}$ $\simeq G$ is an abelian group of order $\left\vert
G\right\vert =k$. Denote by $\phi $ the natural homomorphism $\phi :%
\mathbb{Z}
^{n}\rightarrow G.$ Further, for the smallest period $p$ of $\mathcal{L}$ $\ 
$we have $p=l.c.m.\{ord(\phi (e_{i})),i=1,..,n\},$ where $ord(g)$ stands for
the order of the element $g$ in the group $G.$ Thus $p$ divides $\left\vert
G\right\vert ,$ hence $\mathcal{L}$ is a linear $QPL(n,e)$ code that is $k$%
-periodic, and thus induces a linear $QPL(n,e,k)$ code$.$\medskip
\end{proof}

\noindent By Theorem \ref{9} and Theorem \ref{15} we immediately get:

\begin{corollary}
For each $n>2,$ there are at most finitely many values of $e$ for which
there exists a linear $QPL(n,e)$ code.
\end{corollary}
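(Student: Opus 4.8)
The plan is to combine Theorem \ref{15} with Theorem \ref{9} directly; no new ideas are needed beyond a short unboundedness observation. First I would fix $n>2$ and apply Theorem \ref{9} to obtain a threshold $k_{n}$ with the property that no integer $k\geq k_{n}$ has an optimal embedding in $\mathbb{Z}^{n}$.

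Next, suppose that for some $e$ there exists a linear $QPL(n,e)$ code. By Theorem \ref{15} there is an integer $k$ with $\left\vert S_{n,e}\right\vert \leq k<\left\vert S_{n,e+1}\right\vert $ that has an optimal embedding in $\mathbb{Z}^{n}$. Combining this with the previous paragraph forces $k<k_{n}$, and hence in particular $\left\vert S_{n,e}\right\vert <k_{n}$. So every $e$ admitting a linear $QPL(n,e)$ code lies in the set $\{e:\left\vert S_{n,e}\right\vert <k_{n}\}$.

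Finally I would note that, for $n$ fixed, the sequence $\left\vert S_{n,e}\right\vert $ is strictly increasing in $e$ — indeed $S_{n,e}\subsetneq S_{n,e+1}$ for every $e\geq 0$, and moreover $\left\vert S_{n,e}\right\vert =\dsum_{i\geq 0}2^{i}\binom{n}{i}\binom{e}{i}$ is a polynomial of degree $n$ in $e$ with positive leading coefficient, exactly the formula already recalled in the proof of Theorem \ref{9} — so $\left\vert S_{n,e}\right\vert \rightarrow \infty $ as $e\rightarrow \infty $. Therefore only finitely many values of $e$ satisfy $\left\vert S_{n,e}\right\vert <k_{n}$, and for all remaining $e$ no linear $QPL(n,e)$ code exists, which is the assertion.

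There is essentially no obstacle: the statement is a bookkeeping consequence of the two quoted theorems. The only point deserving an explicit word is that $\left\vert S_{n,e}\right\vert $ is unbounded in $e$, which is immediate from the explicit formula for $\left\vert S_{n,e}\right\vert $.
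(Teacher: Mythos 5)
Your argument is correct and is exactly the route the paper takes: the paper derives this corollary immediately from Theorem \ref{9} and Theorem \ref{15}, and your write-up merely makes explicit the (easy but necessary) observation that $\left\vert S_{n,e}\right\vert$ is unbounded in $e$, so only finitely many $e$ can satisfy $\left\vert S_{n,e}\right\vert < k_{n}$.
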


\noindent Now we concentrate on the case of $n=3$ that is most likely to be
used in a real-life application. It follows from the result of Gravier et
al. \cite{Gra}, that there is no $PL(3,e)$ code for $e>1,$ so there is no
optimal embedding for $k=\left\vert S_{3,e}\right\vert $ in $%
\mathbb{Z}
^{3}.$\medskip

\noindent Set $K=[1,21]\cup \lbrack 27,50]\cup \{55\}\cup \lbrack
70,102]\cup \{117,145\}\cup \lbrack 147,151]\cup \lbrack 153,156]\cup 
\newline
[158,165]\cup \lbrack 167,172]\cup \lbrack 174,177]\cup
\{182,183,190,260,261,263,264,266,267,268,\newline
270\}\cup \lbrack 272,276]\cup
\{279,282,286,288,292,300,421,422,426,438,455\}.$

\begin{theorem}
If $k\in K,$ then $k$ has an optimal embedding in $%
\mathbb{Z}
^{3}.$ In particular, there is a linear $QPL(3,e)$ code for each $e,1\leq
e\leq 6.$
\end{theorem}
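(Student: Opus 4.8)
The plan is to prove the theorem by exhibiting, for each $k\in K$, an explicit abelian group $G$ of order $k$ and a homomorphism $\phi:\mathbb{Z}^3\to G$ certifying the optimal embedding via Theorem \ref{7}; then to read off the $QPL(3,e)$ codes from Theorem \ref{15}. Concretely, I would try first to take $G$ cyclic, $G=\mathbb{Z}_k$, and look for generators $\phi(e_1)=a$, $\phi(e_2)=b$, $\phi(e_3)=c$ in $\mathbb{Z}_k$ such that the restriction of $\phi$ to $S_{3,r}$ is injective and the restriction to $S_{3,r+1}$ is surjective, where $r$ is the unique index with $|S_{3,r}|\le k<|S_{3,r+1}|$. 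Recall $|S_{3,r}|=\tfrac{1}{3}(2r+1)(2r^2+2r+3)$, giving $|S_{3,0}|=1$, $|S_{3,1}|=7$, $|S_{3,2}|=25$, $|S_{3,3}|=63$, $|S_{3,4}|=129$, $|S_{3,5}|=231$, $|S_{3,6}|=377$, $|S_{3,7}|=575$; so the ranges of $K$ are organized exactly so that $1\le k\le 21$ sits between $|S_{3,1}|$ cases and $|S_{3,2}|$, the block near $[70,102]$ sits below $|S_{3,4}|=129$, the block near $[145,190]$ straddles $|S_{3,5}|=231$, the block near $[260,300]$ sits below $|S_{3,6}|=377$, and $\{421,\dots,455\}$ sits below $|S_{3,7}|=575$. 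The "in particular" clause then follows because $|S_{3,e}|\in\{7,25,63,129,231,377\}$ for $e=1,\dots,6$ lies in (the closure of) $K$-ranges of the appropriate size, so Theorem \ref{15} yields a linear $QPL(3,e)$ code for each such $e$.

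The key steps, in order, are: (1) for each of the listed values $k$, compute $r$ and the target profile $|S_{3,i}|-|S_{3,i-1}|$ that an optimal $\phi$ must realize for $i\le r$, plus the surplus $k-|S_{3,r}|$ of elements embedded at distance $r+1$; (2) search for a homomorphism — cyclic first, and when no cyclic generator works, over groups $\mathbb{Z}_{d}\times\mathbb{Z}_{k/d}$ — whose image multiset $\{0,\pm\phi(e_i),\pm\phi(e_i)\pm\phi(e_j),\pm2\phi(e_i)\}$ (the images of $S_{3,2}$, say, when $r=2$) has the right sizes at each distance; (3) verify injectivity on $S_{3,r}$ and surjectivity on $S_{3,r+1}$ for the chosen $\phi$, which is a finite check; (4) assemble the explicit generator data into a table, and invoke Theorem \ref{7} and Corollary \ref{CC}/Theorem \ref{15}. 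For the clean subranges like $[1,21]$ and $[27,50]$, I expect short arithmetic-progression generators (e.g. $\phi(e_i)$ roughly $1,c,c^2$ mod $k$ for suitable $c$) to work uniformly, so those can be handled by a single parametric argument rather than case by case.

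The main obstacle will be the sporadic values — the isolated points $\{55\}$, $\{117,145\}$, $\{182,183,190\}$, the scattered cluster around $260$–$300$, and $\{421,422,426,438,455\}$ — where no obvious algebraic pattern produces the required generators and one is essentially forced into a computer search over triples $(a,b,c)$ (or over triples in a noncyclic group of order $k$) checking the injectivity/surjectivity conditions of Theorem \ref{7}. The search space is only $O(k^3)$ and easily feasible, but presenting it cleanly in the paper means I would include an explicit table of witnessing homomorphisms (one row per $k\in K$, listing the group and $\phi(e_1),\phi(e_2),\phi(e_3)$) and remark that each row is verified by the finite check described in step (3); the reader can then re-run the verification independently. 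A secondary subtlety is making sure that for $k$ with $|S_{3,r}|=k$ exactly (if any occur in $K$, e.g.\ were $25,63,129$ to be listed) the condition degenerates to "$\phi$ is a bijection on $S_{3,r}$", matching the parenthetical in Theorem \ref{7}; since Gravier et al.\ rule out $k=|S_{3,e}|$ for $e>1$, none of those borderline values can actually lie in $K$, so this case does not arise and no extra work is needed there.
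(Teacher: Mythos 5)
Your proposal matches the paper's proof: the authors likewise certify each $k\in K$ by a computer-found homomorphism into the cyclic group $\mathbb{Z}_{k}$ (with the generator values $\phi(e_{1}),\phi(e_{2}),\phi(e_{3})$ tabulated in an appendix, cyclic groups sufficing throughout so the noncyclic fallback you allow for is never needed), verified via Theorem \ref{7}, and then note that $K$ meets each interval $[\,|S_{3,e}|,|S_{3,e+1}|)$ for $1\leq e\leq 6$ so that Theorem \ref{15} applies. One small correction to your phrasing of the ``in particular'' step: what is needed is that $K$ intersects each such interval (e.g.\ $421\in[377,575)$ for $e=6$), not that $|S_{3,e}|$ itself lies in (the closure of) $K$.
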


\begin{proof}
It suffices to prove that the cyclic group $%
\mathbb{Z}
_{k}$ has an optimal embedding in $%
\mathbb{Z}
^{3}$ for each $k\in K$. A required homomorphism $\phi ,$ uniquely
determined by the values of $\phi (e_{i}),i=1,2,3,$ has been found by a
computer search. For example, for $k=7,...,13,$ it suffices to choose $\phi
(e_{i})=i,i=1,2,3.$ The values $\phi (e_{i}),i=2,3,$ for the other $k\in K$
are given in Appendix, while $\phi (e_{1})=1$ except for $k=438$ where $\phi
(e_{1})=2.$\medskip

\noindent For $n=3,$ it is $\left\vert S_{3,e}\right\vert =\frac{4}{3}%
e^{3}+2e^{2}+\frac{8}{3}e+1.$ Thus, for $e=1,2,3,4,5,$ $6,7,$ we get that $%
\left\vert S_{3,e}\right\vert =7,25,63,129,231,377,575,$ respectively. To
prove the second part of the statement it suffices to notice that for each $%
e,1\leq e\leq 6,$ there is a $k\in K$ with $\left\vert S_{3,e}\right\vert
\leq k<\left\vert S_{3,e+1}\right\vert $.\medskip
\end{proof}

\noindent The last theorem asserts that $QPL(3,e)$ codes exist only for
finitely many values of $e.$ We note that the statement of the theorem could
be proved with the condition \textit{linear} dropped.

\begin{theorem}
There is no linear $QPL(3,e)$ code for $e\geq 55.$
\end{theorem}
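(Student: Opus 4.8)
The plan is to combine Theorem \ref{9} (for $n=3$) with Theorem \ref{15} in the contrapositive direction. By Theorem \ref{15}, a linear $QPL(3,e)$ code exists if and only if some integer $k$ with $\left\vert S_{3,e}\right\vert \leq k<\left\vert S_{3,e+1}\right\vert $ has an optimal embedding in $\mathbb{Z}^{3}$. By Theorem \ref{9}, there is a constant $k_{3}$ such that no $k\geq k_{3}$ has an optimal embedding in $\mathbb{Z}^{3}$. Hence, once $\left\vert S_{3,e}\right\vert \geq k_{3}$, no admissible $k$ can have an optimal embedding, and therefore no linear $QPL(3,e)$ code can exist. So the whole argument reduces to making the bound $k_{3}$ explicit and checking that $e\geq 55$ suffices.

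First I would revisit the proof of Theorem \ref{9} specialized to $n=3$, where $P_{3,r}$ is the regular octahedron with $V(P_{3,r})=\frac{(2r+1)^{3}}{6}$, and recall the octahedron's packing efficiency $\alpha$ in $\mathbb{R}^{3}$ (a classical constant, e.g. from Minkowski/Coxeter). The inequality (\ref{13}) says there is no optimal embedding of $k$ whenever $\frac{V(P_{3,r})}{V(T_{K})}>\alpha$, and since $V(T_{K})<V(L_{3,r+1})$, it suffices to have $\frac{V(P_{3,r})}{V(L_{3,r+1})}>\alpha$. Using $V(L_{3,r})=\frac{4}{3}r^{3}+2r^{2}+\frac{8}{3}r+1$ (equivalently the formula $\sum_{j}2^{j}\binom{3}{j}\binom{r}{j}$ from the excerpt), this becomes an explicit rational inequality in $r$; I would solve it to find the least $r_{0}$ for which it holds for all $r\geq r_{0}$. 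Translating back, $k\geq \left\vert S_{3,r_{0}+1}\right\vert $ guarantees no optimal embedding.

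The remaining step is bookkeeping with the indices. If a linear $QPL(3,e)$ code exists, the associated $k$ satisfies $\left\vert S_{3,e}\right\vert \leq k<\left\vert S_{3,e+1}\right\vert $; so it is enough to ensure $\left\vert S_{3,e}\right\vert $ already exceeds the threshold $\left\vert S_{3,r_{0}+1}\right\vert $ coming from Theorem \ref{9}, which for $n=3$ just means $e\geq r_{0}+1$ (monotonicity of $r\mapsto\left\vert S_{3,r}\right\vert $). I would then verify numerically that the packing-efficiency inequality $\frac{(2r+1)^{3}/6}{\frac{4}{3}(r+1)^{3}+2(r+1)^{2}+\frac{8}{3}(r+1)+1}>\alpha$ holds for $r\geq 54$, so that $e\geq 55$ is exactly the range in which no admissible $k$ can have an optimal embedding, and conclude via Theorem \ref{15}.

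The main obstacle is quantitative rather than conceptual: one must pin down the exact numerical value (or a sufficiently sharp upper bound) for the octahedron packing density $\alpha$ in $\mathbb{R}^{3}$, and then check that the cutoff in the resulting inequality lands precisely at $e=55$ and not, say, $e=53$ or $e=57$. This requires a careful estimate of the lower-order terms in $V(L_{3,r+1})$ versus $V(P_{3,r})$, since near the threshold the ratio is close to $1$ and the comparison with $\alpha<1$ is delicate. Everything else — the implications between embeddings, tilings, and codes — is supplied verbatim by Theorems \ref{9} and \ref{15}.
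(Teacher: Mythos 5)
Your proposal matches the paper's proof essentially verbatim: combine Theorem \ref{15} with the quantitative packing argument from the proof of Theorem \ref{9} specialized to $n=3$, take $\alpha=\frac{18}{19}$ (Minkowski's packing density of the regular octahedron), and verify $\frac{(2e+1)^{3}/6}{\left\vert S_{3,e+1}\right\vert -1}>\alpha$ for $e\geq 55$, checking the worst case $k=\left\vert S_{3,e+1}\right\vert -1$. The only ingredient you leave unspecified is the explicit constant $\frac{18}{19}$, which you correctly flag as the remaining quantitative step; note also that the radius $r$ in the packing inequality is exactly $e$ (no shift by one is needed), since $L_{3,e}\subseteq T_{k}\subset L_{3,e+1}$.
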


\begin{proof}
Suppose that there is a linear $QPL(3,e)$ code. By Theorem \ref{15}, there
is a $k,\left\vert S_{3,e}\right\vert \leq k<\left\vert S_{3,e+1}\right\vert 
$, so that $k$ has an optimal embedding in $%
\mathbb{Z}
^{3}.$ Using the language of the proof of Theorem \ref{9}, this implies that
there is a lattice tiling of $%
\mathbb{R}
^{n}$ by a cluster $T_{k}$ of unit cubes, $L_{3,e}\subseteq T_{k}\subset
L_{n,e+1},$ with its volume $V(T_{k})=k.$ \ Also, by proof of Theorem \ref{9}%
, there is no tiling $\mathcal{T}{\normalsize \ }$of $%
\mathbb{R}
^{n}$ by a cluster of unit cubes $T_{k}$ with 
\begin{equation}
\frac{V(P_{3,e})}{V(T_{k})}>\alpha ,  \label{14}
\end{equation}%
where $\alpha $ is the packing efficiency of the regular polytope $P_{3,e}.$
The packing efficiency $\alpha =\frac{18}{19}$ of the regular octahedron has
been determined by Minkowski in \cite{M}. To prove the non-existence of a
linear $QPL(n,e)$ code we need to show that (\ref{14}) is satisfied by all $%
k,\left\vert S_{3,e}\right\vert \leq k<\left\vert S_{3,e+1}\right\vert .$
Clearly, it suffices to show that (\ref{14}) is satisfied by $k=\left\vert
S_{3,e+1}\right\vert -1$ as $\frac{V(P_{3,e})}{V(T_{k+1})}<\frac{V(P_{3,e})}{%
V(T_{k})}.$ Solving (\ref{14}) for $V(P_{3,e})=\frac{(2e+1)^{3}}{3!}$ and $%
V(T_{k})=\left\vert S_{3,e+1}\right\vert -1=\frac{4}{3}(e+1)^{3}+2(e+1)^{2}+%
\frac{8}{3}(e+1)$ we get that there is no linear $QPL(3,e)$ code for $e\geq
55.$
\end{proof}

\section{Appendix- Optimal Embeddings in $%
\mathbb{Z}
^{3}$}

\hspace*{-3cm} 
\begin{tabular}{|l|l|l|l|l|l|l|l|l|l|l|l|l|l|l|}
\hline
\textbf{k} & $\phi (e_{2})$ & $\phi (e_{3})$ & \textbf{k} & $\phi (e_{2})$ & 
$\phi (e_{3})$ & \textbf{k} & $\phi (e_{2})$ & $\phi (e_{1})$ & \textbf{k} & 
$\phi (e_{2})$ & $\phi (e_{3})$ & \textbf{k} & $\phi (e_{2})$ & $\phi
(e_{3}) $ \\ \hline
\textbf{14} & 2 & 5 & \textbf{48} & 7 & 18 & \textbf{95} & 6 & 37 & \textbf{%
169} & 10 & 72 & \textbf{300} & 14 & 132 \\ \hline
\textbf{15} & 2 & 4 & \textbf{49} & 7 & 11 & \textbf{96} & 6 & 37 & \textbf{%
170} & 9 & 64 & \textbf{421} & 16 & 182 \\ \hline
\textbf{16} & 2 & 6 & \textbf{50} & 8 & 12 & \textbf{97} & 7 & 36 & \textbf{%
171} & 12 & 70 & \textbf{422} & 72 & 112 \\ \hline
\textbf{17} & 2 & 6 & \textbf{55} & 5 & 21 & \textbf{98} & 7 & 36 & \textbf{%
172} & 11 & 52 & \textbf{426} & 36 & 50 \\ \hline
\textbf{18} & 2 & 7 & \textbf{70} & 16 & 25 & \textbf{99} & 7 & 37 & \textbf{%
174} & 14 & 34 & \textbf{438} & 45 & 122 \\ \hline
\textbf{19} & 2 & 7 & \textbf{71} & 7 & 30 & \textbf{100} & 6 & 22 & \textbf{%
175} & 10 & 53 & \textbf{455} & 16 & 199 \\ \hline
\textbf{20} & 5 & 8 & \textbf{72} & 8 & 30 & \textbf{101} & 11 & 27 & 
\textbf{176} & 16 & 41 &  &  &  \\ \hline
\textbf{21} & 2 & 8 & \textbf{73} & 6 & 21 & \textbf{102} & 10 & 43 & 
\textbf{177} & 17 & 28 &  &  &  \\ \hline
\textbf{27} & 5 & 8 & \textbf{74} & 8 & 20 & \textbf{117} & 16 & 22 & 
\textbf{182} & 35 & 64 &  &  &  \\ \hline
\textbf{28} & 5 & 8 & \textbf{75} & 6 & 22 & \textbf{145} & 9 & 61 & \textbf{%
183} & 21 & 29 &  &  &  \\ \hline
\textbf{29} & 5 & 13 & \textbf{76} & 7 & 18 & \textbf{147} & 9 & 62 & 
\textbf{190} & 22 & 30 &  &  &  \\ \hline
\textbf{30} & 5 & 8 & \textbf{77} & 7 & 18 & \textbf{148} & 32 & 46 & 
\textbf{260} & 40 & 94 &  &  &  \\ \hline
\textbf{31} & 5 & 8 & \textbf{78} & 7 & 30 & \textbf{149} & 12 & 52 & 
\textbf{261} & 36 & 61 &  &  &  \\ \hline
\textbf{32} & 6 & 9 & \textbf{79} & 6 & 32 & \textbf{150} & 16 & 26 & 
\textbf{263} & 11 & 97 &  &  &  \\ \hline
\textbf{33} & 5 & 8 & \textbf{80} & 6 & 21 & \textbf{151} & 10 & 63 & 
\textbf{264} & 16 & 55 &  &  &  \\ \hline
\textbf{34} & 5 & 8 & \textbf{81} & 8 & 21 & \textbf{153} & 17 & 41 & 
\textbf{266} & 40 & 127 &  &  &  \\ \hline
\textbf{35} & 5 & 8 & \textbf{82} & 7 & 26 & \textbf{154} & 8 & 58 & \textbf{%
267} & 12 & 99 &  &  &  \\ \hline
\textbf{36} & 5 & 8 & \textbf{83} & 6 & 31 & \textbf{155} & 9 & 66 & \textbf{%
268} & 40 & 98 &  &  &  \\ \hline
\textbf{37} & 5 & 8 & \textbf{84} & 6 & 31 & \textbf{156} & 10 & 47 & 
\textbf{270} & 14 & 117 &  &  &  \\ \hline
\textbf{38} & 6 & 9 & \textbf{85} & 7 & 25 & \textbf{158} & 9 & 48 & \textbf{%
272} & 14 & 118 &  &  &  \\ \hline
\textbf{39} & 6 & 9 & \textbf{86} & 6 & 32 & \textbf{159} & 10 & 67 & 
\textbf{273} & 12 & 81 &  &  &  \\ \hline
\textbf{40} & 4 & 15 & \textbf{87} & 6 & 32 & \textbf{160} & 14 & 34 & 
\textbf{274} & 102 & 128 &  &  &  \\ \hline
\textbf{41} & 4 & 10 & \textbf{88} & 6 & 26 & \textbf{161} & 10 & 68 & 
\textbf{275} & 44 & 60 &  &  &  \\ \hline
\textbf{42} & 6 & 10 & \textbf{89} & 6 & 37 & \textbf{162} & 34 & 75 & 
\textbf{276} & 104 & 117 &  &  &  \\ \hline
\textbf{43} & 6 & 10 & \textbf{90} & 6 & 37 & \textbf{163} & 11 & 68 & 
\textbf{279} & 54 & 89 &  &  &  \\ \hline
\textbf{44} & 6 & 10 & \textbf{91} & 7 & 24 & \textbf{164} & 10 & 69 & 
\textbf{282} & 74 & 100 &  &  &  \\ \hline
\textbf{45} & 6 & 10 & \textbf{92} & 10 & 38 & \textbf{165} & 9 & 71 & 
\textbf{286} & 14 & 88 &  &  &  \\ \hline
\textbf{46} & 6 & 21 & \textbf{93} & 6 & 26 & \textbf{167} & 15 & 39 & 
\textbf{288} & 84 & 106 &  &  &  \\ \hline
\textbf{47} & 6 & 19 & \textbf{94} & 6 & 26 & \textbf{168} & 12 & 69 & 
\textbf{292} & 40 & 102 &  &  &  \\ \hline
\end{tabular}


\begin{thebibliography}{99}
\bibitem{A1} R. Ahlswede, H.K. Audinian, and L.H. Khachatrian, \textit{On
perfect codes and related concepts}, Designs, Codes and Cryptography 22
(2001), 221-237.

\bibitem{Bose} B. F. AlBdaiwi, and B. Bose, \textit{Quasi-perfect Lee
distance codes}, IEEE Transaction on Information Th. 49 (2003), 1535--1539.

\bibitem{ADH} C. Araujo, I. Dejter, and P. Horak, \textit{A generalization
of Lee codes}, to appear in Designs, Codes, and Cryptography.

\bibitem{B} E. R. Berlekamp. Algebraic Coding Theory, revised edition.
Laguna Hills, CA Aegean Park Press, 1984.

\bibitem{C} S. I. Costa, M. Muniz, E. Agustini, and R. Palazzo, \textit{%
Graphs, tessellations, and perfect codes on flat tori}, IEEE Transactions on
Information Th. 50(2004), 2363--2377.

\bibitem{Cox} H. S. M. Coxeter, Regular Polytopes, Macmillan, New York, 1963.

\bibitem{E} T. Etzion, \textit{Product constructions for perfect Lee Codes},
IEEE Trans. Inform. Th. 57 (2011), no.11, 7473-7481.

\bibitem{GW} S. W. Golomb, and L. R. Welch, \textit{Perfect codes in the Lee
metric and the packing of polyominos}, SIAM J. Applied Math. 18 (1970),
302-317.

\bibitem{Gra} S. Gravier, M. Mollard, and Ch. Payan, \textit{On the
existence of three-dimensional tiling in the Lee metric}, European J.
Combinatorics 19 (1998), 567-572.

\bibitem{Hor} P. Horak, \textit{On Perfect Lee Codes, }Discrete Math. 309
(2009), 5551-5561.

\bibitem{Ho} P. Horak, \textit{Tilings in Lee Metric}, European J.
Combinatorics 30 (2009), 480--489.

\bibitem{H} P. Horak, and B. F. AlBdaiwi, \textit{Fast decoding
quasi-perfect Lee distance codes}, Designs, Codes, and Cryptography 40
(2006), 357 - 367.

\bibitem{HB} P. Horak, and B. F. AlBdaiwi, \textit{Non-periodeic tilings of }%
$%
\mathbb{R}
^{n}$ \textit{by crosses,} Discrete \& Computational Geometry 47 (2012),
1-16.

\bibitem{HB1} P. Horak, and B. F. AlBdaiwi, \textit{Diameter perfect Lee
codes}, IEEE Trans. Inform. Th. 58 (2012), no.8, 5490-5499.

\bibitem{Keaton} K. Keating, and A. Vince, \textit{Isohedral Polyomino
Tiling of the Plane}, Discrete \&Computational Geometry 21 (1999), 615-630.

\bibitem{Lee} C. Y. Lee, \textit{Some properties of nonbinary
error-correcting code right}, IRE Trans. Inform. Th. 4 (1958), 72--82.

\bibitem{M} H. Minkowski, \textit{Quelques nouveaux theorems sur
l'approximation des quantites a l'aide de nombres rationnel}s, Bulletin des
sciences mathematique (2), vol. 25, pp. 72-76.

\bibitem{M1} H. Minkowski, \textit{Dichteste gitterformige Lagerung
kongruenter Korper}, Nachrichten Ges. Wiss. Gottingen, pp. 311--355, 1904.

\bibitem{Milnor} J. Milnor, \textit{Hilbert's problem 18: on
crystallographic groups, fundamental domains, and on sphere packing}.
Proceedings of Symposia in Pure Mathematics,Vol. XXVIII. American Math.
Soc., Providence, R. I., 1976, pp. 491-506.

\bibitem{P} K. A. Post, \textit{Nonexistence Theorem on Perfect Lee Codes
over Large Alphabets}, Information and Control 29 (1975), 369-380.

\bibitem{Sca} S. \v{S}pacapan, \textit{Non-existence of face-to-face four
dimensional tiling in the Lee metric}, European J. Combinatorics 28 (2007),
127-133.

\bibitem{Sca1} S. \v{S}pacapan, \textit{Optimal Lee-type local structures in
Cartesian products of cycles and paths}. SIAM J. Discrete Math. 21 (2007),
750--762.

\bibitem{Stein} S. K. Stein, and S. Szabo, Algebra and Tiling :
Homomorphisms in the Service of Geometry, Carus Mathematical Monographs,
\#25, MAA, 1994, 209 pp.

\bibitem{Ulrich} W. Ulrich, \textit{Non-binary error-correcting codes}, The
Bell System Technical Journal 36 (1957), 1341-1387.
\end{thebibliography}
\end{document}